%% file: arXiv.tex
\documentclass{article}
\usepackage{geometry}
\usepackage[nospace,noadjust]{cite}
\usepackage{amsmath,amssymb,amsfonts,times}
\usepackage{graphicx}
\usepackage{textcomp}
\usepackage{xcolor}
\usepackage{subfigure}
\usepackage{epsfig}
\usepackage{multirow}
\usepackage{algorithm}
\usepackage{algorithmicx}
\usepackage{algpseudocode}
\usepackage{array}
\usepackage{epstopdf}
\usepackage{color}
\usepackage{diagbox}
\usepackage{bm}
\usepackage{bbm}
\usepackage{mathrsfs}
\usepackage{tcolorbox}
\usepackage{pdfsync}

\DeclareMathOperator{\Var}{Var}
\DeclareMathOperator{\Cov}{Cov}

\input{macro}

\usepackage{mathtools}

\title{\LARGE \bf From Barren Plateaus to SPSA Optimization \\ in Variational Quantum Eigensolvers}

\author{Zhen Qin\thanks{Zhen Qin is with the Michigan Institute for Computational Discovery and Engineering, Department of Electrical Engineering and Computer Science and Department of Statistics, University of Michigan, Ann Arbor, MI 48109 USA. (e-mail: zhenqin@umich.edu).}}

\begin{document}

\maketitle

\begin{abstract}
The barren plateau (BP) phenomenon poses a fundamental challenge to the trainability of variational quantum eigensolvers (VQEs) by causing exponentially vanishing gradients as the system size increases. While extensive studies have investigated the geometric origins of BP, its impact on the optimization dynamics and complexity of practical algorithms under finite-shot measurements remains poorly understood. In this paper, we develop a theoretical framework that characterizes how the BP affects the optimization dynamics of the Simultaneous Perturbation Stochastic Approximation (SPSA) algorithm and quantifies the resulting iteration complexity and measurement budget. We derive non-asymptotic bias and variance characterizations of the SPSA gradient estimator, introduce a signal-to-noise ratio analysis to quantify gradient reliability, and establish convergence guarantees for SPSA under finite-shot measurements. Our results show that the exponentially decaying gradient energy associated with BP leads to an exponential increase in the number of iterations required to achieve a fixed relative optimization accuracy, which in turn results in an exponential increase in the total measurement budget.
\end{abstract}


\section{Introduction}
\label{sec:introduction}

The Variational Quantum Eigensolver (VQE) \cite{peruzzo2014variational,mcclean2016theory,kandala2017hardware,wang2019accelerated,tilly2022variational} is one of the leading variational algorithms for solving large-scale eigenvalue problems on Noisy Intermediate-Scale Quantum (NISQ) devices. Its practical success, however, critically depends on the ability to efficiently optimize parameterized quantum circuits, a task that remains one of the major challenges in variational quantum computing. A fundamental obstacle is the barren plateau (BP) phenomenon \cite{wecker2015progress,mcclean2018barren,bittel2021training,cerezo2021cost,larocca2025barren,qin2026geometric}, under which the objective landscape becomes increasingly flat as the system size grows. More precisely, the gradients of the objective function decay exponentially with the number of qubits, so that local perturbations of the circuit parameters induce only exponentially small changes in the objective value. Consequently, optimization algorithms receive progressively weaker local information, substantially degrading their ability to identify effective descent directions and rendering the training of deep parameterized quantum circuits increasingly difficult. Extensive theoretical studies have identified several mechanisms contributing to the emergence of BP, including the exponentially large Hilbert space \cite{volkoff2021large,diaz2023showcasing,ragone2024lie,fontana2024characterizing,cerezo2025does}, circuit architecture and depth \cite{mcclean2018barren,cerezo2021cost,larocca2022diagnosing,holmes2022connecting}, parameter initialization strategies \cite{grant2019initialization,patti2021entanglement,sauvage2021flip,rad2022surviving,zhang2022escaping,wang2024trainability,puig2025variational}, and hardware noise \cite{wang2021noise,stilck2021limitations,de2023limitations,schumann2024emergence,sannia2024engineered,liu2025stochastic,singkanipa2025beyond}.

Despite the substantial progress in understanding the mechanisms underlying BP, considerably less is known about how BP influences the behavior of optimization algorithms. Most existing analyses focus on the geometry of the optimization landscape, whereas practical performance is ultimately determined by how optimization algorithms exploit the information provided by the landscape. This issue is further complicated by the fact that, in practical VQE implementations, the information available to the optimizer is estimated from finite-shot measurements \cite{tilly2022variational,bittel2022fast,teo2023optimized,bonet2023performance,scriva2024challenges,kaminishi2026impact}. Unlike classical optimization, which typically assumes exact evaluations of the objective function and its gradients, VQE relies on finite-shot measurements to estimate these quantities, inevitably introducing statistical fluctuations into every optimization step. Consequently, the effectiveness of an optimization algorithm depends not only on the geometry of the landscape but also on the quality of the optimization information, which is fundamentally limited by the available measurement budget. Despite the importance of both factors, a rigorous theoretical understanding of how they jointly affect optimization algorithms remains lacking. This motivates the following question: \emph{how does BP affect the optimization dynamics and complexity of practical optimization algorithms under finite-shot measurements?}

Among the optimization methods developed for VQE, the Simultaneous Perturbation Stochastic Approximation (SPSA) algorithm \cite{kandala2017hardware,ganzhorn2019gate,cade2020strategies,bonet2023performance,grossi2023finite,tran2024variational,jiang2024error,le2026variational} has become one of the most widely adopted approaches because of its exceptional measurement efficiency. Unlike conventional gradient-based methods, such as finite-difference method \cite{uvarov2020variational}, the parameter-shift rule \cite{schuld2019evaluating}, and quantum natural gradient method \cite{stokes2020quantum}, whose computational cost generally increases with the number of optimization parameters, SPSA estimates the gradient using only two objective function evaluations per iteration, irrespective of the problem dimension. This dimension-independent measurement cost makes SPSA particularly attractive for large-scale variational quantum circuits, where measurement resources are often the primary computational bottleneck. As a result, SPSA has been widely adopted in practical VQE implementations on current quantum hardware.

Motivated by these observations, in this paper we develop a theoretical framework to characterize how the BP affects SPSA optimization under finite-shot measurements. Specifically, we first derive non-asymptotic characterizations of the bias and variance of the SPSA gradient estimator under finite-shot measurements, explicitly quantifying how these statistical properties depend on both the variational circuit parameters and the allocated measurement budget. Building upon this characterization, we introduce a signal-to-noise ratio (SNR) metric to quantify the reliability of the SPSA gradient estimator at an individual optimization step and establish the measurement budget required to achieve a prescribed level of gradient estimation accuracy. Finally, we extend this analysis to the full optimization trajectory and develop a convergence theory for SPSA under finite-shot measurements, deriving sufficient conditions on both the number of iterations and the total measurement budget required to achieve a prescribed relative gradient energy level. We further show that the exponential decay of gradient energy associated with BP translates into exponential growth in both the iteration complexity and the total measurement budget.

\section{The SPSA Algorithm: Statistical Properties and Convergence Guarantee}

We begin by introducing the SPSA framework for VQEs under measurement noise. Let
$f(\vtheta)$ denote the ideal objective function, which corresponds to the expectation value of the target Hamiltonian with respect to the parameterized quantum state. In practice, however, this expectation can only be estimated from finite-shot measurements, leading to the noisy objective
\begin{eqnarray}
\label{noisy loss VQE}
\hat f(\vtheta) &\!\!\!\!=\!\!\!\!& f(\vtheta)+\xi(\vtheta)\nonumber\\
&\!\!\!\!:=\!\!\!\!& \langle \vphi_0 | \mU_1^\dagger(\theta_1)\cdots \mU_N^\dagger(\theta_N) \mH \mU_N(\theta_N)\cdots \mU_1(\theta_1) |\vphi_0\rangle + \xi(\vtheta),
\end{eqnarray}
where $\vtheta=\begin{bmatrix}\theta_1 & \cdots & \theta_N\end{bmatrix}^\top\in\R^N$ is the variational parameter vector, $|\vphi_0\rangle$ is the reference state, $\mU_1(\theta_1),\ldots,\mU_N(\theta_N)$ are parameterized unitary gates, $\mH=\sum_{\alpha=1}^{L}w_\alpha\mP_\alpha$ is the target Hamiltonian with Pauli operators $\mP_\alpha\in\{\mId,\mX,\mY,\mZ\}^{\otimes n}$ satisfying
$\|\mP_\alpha\|=1$, and $\xi(\vtheta)$ models the stochastic measurement error arising from finite-shot sampling.

To minimize the noisy objective, SPSA updates the variational parameters according to
\begin{eqnarray}
\label{SPSA iteration}
\vtheta_{t+1} = \vtheta_t - \mu_t \hat \vg(\vtheta_t),
\end{eqnarray}
where $\mu_t>0$ denotes the step size and $\hat\vg(\vtheta_t)=[\hat g_1(\vtheta_t),\ldots,\hat g_N(\vtheta_t)]^\top\in\R^N$ is the stochastic gradient estimator. Unlike conventional finite-difference methods, SPSA estimates the entire gradient using only two objective evaluations per iteration, regardless of the dimension $N$. Specifically, each component of the estimator is given by
\begin{eqnarray}
\label{elements in the gradient of SPSA iteration}
 \hat g_{\ell}(\vtheta_t)  = \frac{\hat f(\vtheta_t + c_t\mDelta_t) - \hat f(\vtheta_t- c_t\mDelta_t) }{2c_t\Delta_{\ell,t}},
\end{eqnarray}
where $c_t>0$ is the perturbation radius and $\mDelta_t=[\Delta_{1,t},\ldots,\Delta_{N,t}]^\top$ is a random perturbation vector whose entries are independent Rademacher random variables, i.e., $\mathbb{P}(\Delta_{\ell,t}=1)=\mathbb{P}(\Delta_{\ell,t}=-1)=1/2$.

The remainder of this section establishes the statistical properties of the SPSA gradient estimator and presents the convergence guarantee for the resulting stochastic optimization algorithm.

\paragraph*{Statistical Properties} We first fix an iteration $t$ and establish several fundamental properties of the VQE objective, which will serve as the basis for the subsequent statistical analysis of the SPSA gradient estimator. Throughout this section, without loss of generality, we assume that each parameterized quantum gate takes the form $ \mU_\ell(\theta_\ell)=e^{-i\theta_\ell\mG_\ell}$, where $\mG_\ell\in\C^{2^n\times 2^n}$ denotes the Hermitian generator of the $\ell$-th parameterized gate and satisfies $\|\mG_\ell\|\le1$ for $\ell=1,\ldots,N$.
\begin{prop}
\label{Proposition: Hessian Property}
Let $f(\vtheta)=\langle \vphi_0|\mU^\dagger(\vtheta)\mH\mU(\vtheta)|\vphi_0\rangle$ where $\mH = \sum_{\alpha=1}^L w_\alpha \mP_\alpha$, and  $\mU(\vtheta)=\mU_N(\theta_N)\cdots \mU_1(\theta_1)$. Suppose that $\mU_\ell(\theta_\ell)=e^{-i\theta_\ell \mG_\ell}$ with $\|\mG_\ell\|\le 1$ for all $\ell$. Then the Hessian of $f$ is Lipschitz continuous with respect to the spectral norm:
\begin{eqnarray}
\label{Hessian Lipschitz for spectral norm}
\|\nabla^2f(\vx)-\nabla^2f(\vy)\|  \le 8\,\|\vw\|_1\,N^{\frac{3}{2}}\|\vx-\vy\|_2, \  \forall \vx,\vy\in\mR^{N},
\end{eqnarray}
where $\vw = \begin{bmatrix} w_1 & \cdots & w_L \end{bmatrix}^\top\in\R^{L}$.
\end{prop}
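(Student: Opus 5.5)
Since $f$ is smooth (indeed trigonometric in each $\theta_\ell$), the plan is to bound the third derivatives uniformly and then integrate along the segment joining $\vx$ and $\vy$. By the mean value theorem in integral form,
\begin{eqnarray*}
\nabla^2 f(\vx)-\nabla^2 f(\vy)=\int_0^1 \frac{d}{ds}\nabla^2 f(\vy+s(\vx-\vy))\,ds,
\end{eqnarray*}
and the $(j,k)$ entry of the integrand is $\sum_{\ell}\partial_\ell\partial_j\partial_k f\,(x_\ell-y_\ell)$. So it suffices to show that every third partial derivative satisfies $|\partial_\ell\partial_j\partial_k f(\vtheta)|\le 8\|\vw\|_1$ for all $\vtheta$ and all indices, and then to convert entrywise bounds into a spectral-norm bound.

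\textbf{Step 1 (third-derivative bound).} Differentiating $\mU_m(\theta_m)=e^{-i\theta_m\mG_m}$ gives $\partial_m\mU_m=-i\mG_m\mU_m$. Hence each partial derivative of $\mU(\vtheta)$ inserts a factor $-i\mG_m$ at the appropriate position of the product, and each such factor has norm at most $1$. Write $f=\langle\vphi_0|\mU^\dagger\mH\mU|\vphi_0\rangle$ and apply the Leibniz rule to $\partial_\ell\partial_j\partial_k$. Each of the three derivatives lands either on $\mU^\dagger$ or on $\mU$, so there are $2^3=8$ terms. In every term, the differentiated product is a product of unitaries and at most three generators, so its norm is at most $1$. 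Each term is therefore bounded in absolute value by $\|\mH\|\le\sum_\alpha|w_\alpha|\|\mP_\alpha\|=\|\vw\|_1$. Repeated indices, such as $\ell=j$, are covered by the same count, because $\partial_m^2\mU_m=-\mG_m^2\mU_m$ also has norm at most $1$. This yields $|\partial_\ell\partial_j\partial_k f|\le 8\|\vw\|_1$.

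\textbf{Step 2 (entrywise to spectral norm).} Set $\mM=\nabla^2 f(\vx)-\nabla^2 f(\vy)$. Each entry satisfies
\begin{eqnarray*}
|M_{jk}|\le \sup_{\vtheta}\sum_\ell|\partial_\ell\partial_j\partial_k f(\vtheta)|\,|x_\ell-y_\ell|\le 8\|\vw\|_1\|\vx-\vy\|_1\le 8\|\vw\|_1\sqrt N\|\vx-\vy\|_2.
\end{eqnarray*}
Then $\|\mM\|\le\|\mM\|_F\le N\max_{j,k}|M_{jk}|\le 8\|\vw\|_1N^{3/2}\|\vx-\vy\|_2$, which is exactly the claimed constant.

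The only real care needed is the bookkeeping in Step 1: we must make sure the Leibniz expansion produces at most $8$ terms, each of operator norm at most $\|\mH\|$, including the cases of coinciding indices. The rest is the crude but sufficient chain of norm inequalities $\|\cdot\|_1\le\sqrt N\|\cdot\|_2$ and $\|\cdot\|\le\|\cdot\|_F$, which produces the $N^{3/2}$ factor.
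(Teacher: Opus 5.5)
Your proof is correct and has the same skeleton as the paper's. Both bound the third derivatives uniformly by $|\partial_{\ell jk} f|\le 8\|\vw\|_1$. Both then apply the mean value theorem entrywise to get $|M_{jk}|\le 8\|\vw\|_1\|\vx-\vy\|_1\le 8\sqrt{N}\|\vw\|_1\|\vx-\vy\|_2$. Finally, both convert the entrywise bound to a spectral one at the cost of a factor $N$. You use $\|M\|\le\|M\|_F$, while the paper uses the maximum absolute row sum of the symmetric difference matrix; the constants agree, and your version does not need symmetry.

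The real difference is how the third-derivative bound is obtained. The paper introduces dressed generators $\wt\mG_\ell=\mR_\ell\mG_\ell\mR_\ell^\dagger$. It proves $\partial_{\ell jk}f_\alpha=-i\langle\vphi_0|\mU^\dagger[\wt\mG_k,[\wt\mG_j,[\wt\mG_\ell,\mP_\alpha]]]\mU|\vphi_0\rangle$ for $\ell\ge j\ge k$, using $\partial_a\wt\mG_\ell=0$ for $a\le\ell$. It then extends this to all index orders via Clairaut's theorem and bounds the triple commutator by $8$.

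Your Leibniz expansion of $\mU^\dagger\mH\mU$ produces exactly the eight terms of that expanded commutator. The difference is that the bare generators $\mG_m$ sit at their own positions in the product. As a result, unordered and repeated indices need no separate argument. Your expansion also naturally gives $8\|\mH\|$, which can be smaller than $8\|\vw\|_1$.

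The paper's commutator form does provide explicit gradient and Hessian formulas, and it reuses the second-order one in the gradient-Lipschitz lemma. For the Lipschitz bound itself, however, your expansion is shorter and fully rigorous.
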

The proof is deferred to Appendix~\ref{Proof of Hessian property}. Proposition~\ref{Proposition: Hessian Property} provides an explicit Hessian Lipschitz constant for VQE objectives, which serves as a key ingredient in the subsequent analysis of the SPSA estimator. In addition to this intrinsic geometric effect, practical VQE implementations estimate objective values through finite-shot measurements, which introduces stochastic fluctuations. The following lemma establishes a measurement noise model for the objective-value estimation and characterizes the corresponding statistical properties.

\begin{lemma}
\label{lem:noise-model}
Let $f(\vtheta)=\langle \vphi_0|\mU^\dagger(\vtheta)\mH\mU(\vtheta)|\vphi_0\rangle$ where $\mH = \sum_{\alpha=1}^L w_\alpha \mP_\alpha$, and  $\mU(\vtheta)=\mU_N(\theta_N)\cdots \mU_1(\theta_1)$. Fix a perturbation direction $\mDelta$ and define $\vtheta^\pm=\vtheta\pm c\mDelta$. For each $s\in\{+,-\}$, the corresponding quantum state is $\vrho(\vtheta^s) = \mU(\vtheta^s) |\vphi_0\rangle \langle\vphi_0| \mU^\dagger(\vtheta^s)$. For each Pauli observable $\mP_\alpha$ with spectral decomposition $\mP_\alpha={\bm \Pi}_\alpha^+-{\bm \Pi}_\alpha^-$, denote by $X_{\alpha,k}^s\in\{\pm1\}$, $k=1,\ldots,M$, the $M$ i.i.d. measurement outcomes, where the outcomes $\pm1$ are associated with the projectors ${\bm \Pi}_\alpha^\pm$, respectively. The distribution of these random variables is given by $\Pr  ( X_{\alpha,k}^s=\pm1 \, |\, \mDelta ) = \trace\!\left( \vrho(\vtheta^s){\bm \Pi}_\alpha^\pm \right)$. The empirical estimator of the Pauli expectation value is obtained by averaging the corresponding measurement outcomes, $\hat P_\alpha^s = \frac1M\sum_{k=1}^{M}X_{\alpha,k}^s$, and the oracle estimate is constructed as $\hat f^s = \sum_{\alpha=1}^{L}w_\alpha\hat P_\alpha^s$. All measurement outcomes $\{X_{\alpha,k}^s\}$ are mutually independent conditional on $\mDelta$ for different choices of the Pauli index $\alpha$, the perturbation sign $s$, and the shot index $k$. Accordingly, define the measurement noise by
\begin{eqnarray}
\label{oracle noise}
\xi(\vtheta^s) := \hat f^s-f(\vtheta^s).
\end{eqnarray}
Conditioned on $\mDelta$, this noise satisfies
\begin{eqnarray}
\label{properties of noise1}
\E[\xi(\vtheta^s)\mid\mDelta]&\!\!\!\!=\!\!\!\!&0,\\
\label{properties of noise2}
\Var(\xi(\vtheta^s)\mid\mDelta) &\!\!\!\!\le\!\!\!\!& \frac{\|\vw\|_2^2}{M},
\end{eqnarray}
with $\vw = \begin{bmatrix} w_1 & \cdots & w_L \end{bmatrix}^\top\in\R^{L}$ and the noises associated with the two perturbations are independent, i.e.,
\begin{eqnarray}
\label{independence of noise}
\xi(\vtheta^+) \perp \xi(\vtheta^-) \mid \mDelta.
\end{eqnarray}
\end{lemma}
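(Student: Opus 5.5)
The plan is to work conditionally on $\mDelta$ throughout. Once $\mDelta$ is fixed, $\vtheta^\pm$ are deterministic parameter vectors and $\vrho(\vtheta^s)$ are fixed density matrices, so all randomness comes from the measurement outcomes $X_{\alpha,k}^s$. The three claims then reduce to elementary facts about sums of independent bounded random variables.

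\emph{Unbiasedness.} First compute the conditional mean of a single outcome using the stated distribution:
\[
\E[X_{\alpha,k}^s\mid\mDelta]=\trace(\vrho(\vtheta^s){\bm\Pi}_\alpha^+)-\trace(\vrho(\vtheta^s){\bm\Pi}_\alpha^-)=\trace(\vrho(\vtheta^s)\mP_\alpha).
\]
Averaging over $k$ gives $\E[\hat P_\alpha^s\mid\mDelta]=\trace(\vrho(\vtheta^s)\mP_\alpha)$. Linearity over $\alpha$ with weights $w_\alpha$ then gives
\[
\E[\hat f^s\mid\mDelta]=\trace(\vrho(\vtheta^s)\mH)=\langle\vphi_0|\mU^\dagger(\vtheta^s)\mH\mU(\vtheta^s)|\vphi_0\rangle=f(\vtheta^s).
\]
This is \eqref{properties of noise1}. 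Here I use that ${\bm\Pi}_\alpha^+ + {\bm\Pi}_\alpha^-=\mId$, so the two stated probabilities sum to one. For $\mP_\alpha=\mId$ the projector ${\bm\Pi}_\alpha^-$ is zero, and the outcome is deterministically $+1$, consistent with the formula.

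\emph{Variance bound.} Since $X_{\alpha,k}^s\in\{\pm1\}$, we have $(X_{\alpha,k}^s)^2=1$, so
\[
\Var(X_{\alpha,k}^s\mid\mDelta)=1-\trace(\vrho(\vtheta^s)\mP_\alpha)^2\le 1.
\]
Conditional independence across $k$ gives $\Var(\hat P_\alpha^s\mid\mDelta)\le 1/M$. Conditional independence across $\alpha$ for fixed $s$ removes all cross-covariances, so
\[
\Var(\xi(\vtheta^s)\mid\mDelta)=\Var(\hat f^s\mid\mDelta)=\sum_{\alpha=1}^L w_\alpha^2\Var(\hat P_\alpha^s\mid\mDelta)\le \frac{\|\vw\|_2^2}{M}.
\]
This is \eqref{properties of noise2}. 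Subtracting the constant $f(\vtheta^s)$ does not change the variance.

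\emph{Independence.} Given $\mDelta$, $\xi(\vtheta^+)$ is a measurable function of the family $\{X_{\alpha,k}^+\}_{\alpha,k}$ alone, and $\xi(\vtheta^-)$ is a measurable function of $\{X_{\alpha,k}^-\}_{\alpha,k}$ alone. By the assumed mutual conditional independence of all outcomes, these two families are conditionally independent. Measurable functions of independent families are independent, which gives \eqref{independence of noise}.

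The only point needing care, and really the only potential obstacle, is the variance step. It must rely on independence across distinct Pauli terms; otherwise covariances could push the bound up to $\|\vw\|_1^2/M$. I will make this step explicit by invoking the stated conditional independence across $\alpha$. Everything else is routine.
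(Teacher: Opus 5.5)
Your proposal is correct and follows essentially the same route as the paper. Both use the Born-rule computation of $\E[X_{\alpha,k}^s\mid\mDelta]$ followed by linearity, then a variance expansion in which conditional independence across $\alpha$ and $k$ removes all cross-covariances and $\Var(X_{\alpha,k}^s\mid\mDelta)=1-(\E[X_{\alpha,k}^s\mid\mDelta])^2\le 1$ bounds the rest. Both conclude with independence of measurable functions of the disjoint outcome families $\{X_{\alpha,k}^+\}$ and $\{X_{\alpha,k}^-\}$.
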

The proof is provided in {Appendix}~\ref{proof of lemma different noise properties}. The preceding lemma isolates the contribution of finite-shot measurements from the intrinsic geometry of the VQE objective. In particular, the measurement noise is conditionally unbiased, its variance decreases inversely with the  measurement budget, and the two perturbed objective evaluations are conditionally independent. These properties, together with the Hessian Lipschitz continuity established in Proposition~\ref{Proposition: Hessian Property}, provide the foundation for analyzing the statistical behavior of the SPSA gradient estimator.
\begin{theorem}\label{thm:main property of noisy gradient}
Consider the VQE objective $f(\vtheta) = \langle \vphi_0| \mU^\dagger(\vtheta)\mH\mU(\vtheta) |\vphi_0\rangle$ with Hamiltonian decomposition $\mH=\sum_{\alpha=1}^{L}w_\alpha\mP_\alpha$, and denote  $\vw = \begin{bmatrix} w_1 & \cdots & w_L \end{bmatrix}^\top\in\R^{L}$. Under the Hessian Lipschitz property established in Proposition~\ref{Proposition: Hessian Property} and the measurement noise model in Lemma~\ref{lem:noise-model}, the SPSA gradient estimator satisfies, for every $\ell=1,\ldots,N$,
\begin{eqnarray}
\label{bias of noisy gradient}
\big|\E[\hat g_\ell(\vtheta)] - \partial_\ell f(\vtheta)\big| \;\le\; 4c^2 N^3\|\vw\|_1,
\end{eqnarray}
where $\partial_\ell f$ denotes the partial derivative of $f(\vtheta)$ with respect to $\theta_\ell$. Moreover,
\begin{eqnarray}
\label{variance of noisy gradient}
\Var(\hat g_\ell(\vtheta) - \partial_\ell f(\vtheta)) &\!\!\!\!\le\!\!\!\!&  \|\nabla f(\vtheta)\|_2^2 - (\partial_{\ell} f(\vtheta))^2 + \frac{\|\vw\|_2^2}{2c^2M}\nonumber\\
&\!\!\!\!\!\!\!\!& + 16  c^4 N^6 \|\vw\|_1^2 +  8c^2 N^{3} \| \vw\|_1 \sqrt{\|\nabla f(\vtheta)\|_2^2 - (\partial_{\ell} f(\vtheta))^2},
\end{eqnarray}
where $\nabla f(\vtheta) = \begin{bmatrix} \partial_1 f(\vtheta) & \cdots & \partial_N f(\vtheta)  \end{bmatrix}^\top\in\R^{N}$. Here, the expectation and variance are taken with respect to the joint randomness of the SPSA perturbation $\mDelta$ and the measurement noise $\xi(\vtheta)$.
\end{theorem}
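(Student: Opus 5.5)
We decompose the estimator as in standard SPSA analysis. We write $\hat f^\pm = f(\vtheta^\pm)+\xi^\pm$ with $\xi^\pm := \xi(\vtheta^\pm)$. Since $\Delta_\ell^{-1}=\Delta_\ell$, we have
\begin{eqnarray*}
\hat g_\ell(\vtheta) = \frac{f(\vtheta+c\mDelta)-f(\vtheta-c\mDelta)}{2c\Delta_\ell} + \frac{\xi^+-\xi^-}{2c\Delta_\ell}.
\end{eqnarray*}
We Taylor-expand $f(\vtheta\pm c\mDelta)$ to third order with integral remainder. The even-order terms cancel in the symmetric difference, giving
\begin{eqnarray*}
f(\vtheta+c\mDelta)-f(\vtheta-c\mDelta) = 2c\,\mDelta^\top\nabla f(\vtheta) + R(\mDelta).
\end{eqnarray*}
The remainder $R$ is controlled by the third-order behavior along the segment. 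By Proposition~\ref{Proposition: Hessian Property}, the Hessian is $L_H$-Lipschitz with $L_H = 8\|\vw\|_1N^{3/2}$. Writing $f(\vtheta\pm c\mDelta)-f(\vtheta)\mp c\mDelta^\top\nabla f-\tfrac{c^2}{2}\mDelta^\top\nabla^2 f\mDelta$ in integral form, each of the two one-sided remainders is at most $\tfrac{L_H}{6}\|c\mDelta\|_2^3=\tfrac{L_H}{6}c^3N^{3/2}$. Hence
\begin{eqnarray*}
|R(\mDelta)|\le \tfrac{L_H}{3}c^3N^{3/2}\le 8c^3N^3\|\vw\|_1,
\end{eqnarray*}
using $\|\mDelta\|_2=\sqrt N$. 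Dividing by $2c|\Delta_\ell|=2c$ gives a deterministic bound $|r_\ell(\mDelta)|\le 4c^2N^3\|\vw\|_1$, where $r_\ell := R/(2c\Delta_\ell)$.

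\textbf{Bias.} We have $\hat g_\ell = \partial_\ell f + \sum_{j\ne\ell}\Delta_j\Delta_\ell\partial_j f + r_\ell + (\xi^+-\xi^-)/(2c\Delta_\ell)$. The cross terms have zero mean by independence of the Rademacher entries. The noise term has zero conditional mean given $\mDelta$ by \eqref{properties of noise1}, so by the tower property it has zero mean. Therefore $|\E\hat g_\ell-\partial_\ell f| = |\E r_\ell|\le 4c^2N^3\|\vw\|_1$, which is \eqref{bias of noisy gradient}.

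\textbf{Variance.} Set $A := \sum_{j\ne\ell}\Delta_j\Delta_\ell\partial_j f$ and $B := (\xi^+-\xi^-)/(2c\Delta_\ell)$. We bound $\Var(A+r_\ell+B)\le \E[(A+r_\ell+B)^2]$. Conditioning on $\mDelta$, the cross terms $\E[AB]$ and $\E[r_\ell B]$ vanish by \eqref{properties of noise1}. By \eqref{independence of noise} and \eqref{properties of noise2},
\begin{eqnarray*}
\E[B^2\mid\mDelta]=\frac{\Var(\xi^+\mid\mDelta)+\Var(\xi^-\mid\mDelta)}{4c^2}\le \frac{\|\vw\|_2^2}{2c^2M}.
\end{eqnarray*}
Next, $\E A^2=\sum_{j\ne\ell}(\partial_j f)^2=\|\nabla f\|_2^2-(\partial_\ell f)^2$, since $\E[\Delta_j\Delta_k]=\delta_{jk}$ and $\Delta_\ell^2=1$. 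We also have $\E r_\ell^2\le 16c^4N^6\|\vw\|_1^2$. Finally, by Cauchy--Schwarz,
\begin{eqnarray*}
2\E|Ar_\ell|\le 2\cdot 4c^2N^3\|\vw\|_1\sqrt{\E A^2}.
\end{eqnarray*}
Summing the four contributions yields \eqref{variance of noisy gradient}.

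The main obstacle is getting the constant in the Taylor remainder right. We need the remainder bound $\tfrac{L_H}{6}\|h\|^3$ for a Hessian-Lipschitz function, and we must track how the spectral-norm Lipschitz constant combines with $\|\mDelta\|_2^3=N^{3/2}$ to give exactly $N^3$. We will first verify the one-sided remainder via $\int_0^1(1-s)\,h^\top(\nabla^2f(\vtheta+sh)-\nabla^2 f(\vtheta))h\,ds$. If the constant comes out tighter than needed, that only helps.
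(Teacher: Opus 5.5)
Your proof is correct and follows the paper's argument. You use the same decomposition $\hat g_\ell=\partial_\ell f+\sum_{j\ne\ell}\Delta_j\Delta_\ell\partial_j f+r_\ell+(\xi^+-\xi^-)/(2c\Delta_\ell)$, the same conditional-mean-zero arguments for the noise term, and the same Cauchy--Schwarz bound on the cross term between the Rademacher sum and the remainder. The differences are minor. You use the integral-form remainder compared against $\nabla^2 f(\vtheta)$, which gives the sharper $|r_\ell|\le \tfrac43 c^2N^3\|\vw\|_1$, whereas the paper uses the Lagrange form with Hessians at two intermediate points $\vtheta\pm\zeta^\pm c\mDelta$. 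You also bound the second moment $\E[(\hat g_\ell-\partial_\ell f)^2]$ directly, whereas the paper first splits $\Var(\hat g_\ell)$ into the variance of its $\mDelta$-measurable part plus the noise variance; both routes yield the same bound.
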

The proof is deferred to Appendix~\ref{Proof of theorem main property of noisy gradient}. Theorem~\ref{thm:main property of noisy gradient} reveals that the statistical behavior of the SPSA gradient estimator is governed jointly by the local geometry of the VQE objective and the finite-shot measurement process. In particular, the variance bound naturally decomposes into three components: the intrinsic variance induced by the random perturbation of SPSA, the measurement variance that decreases with the measurement budget $M$, and a higher-order term arising from the Hessian Lipschitz continuity of the objective function.

Since the bias term is of order $O(c^2)$ and can be controlled by selecting a sufficiently small perturbation parameter $c$, we focus on the stochastic uncertainty associated with the SPSA estimator. To quantify the reliability of the estimated gradient, we compare the strength of the true optimization signal with the estimation uncertainty. Specifically, because the gradient component $\partial_\ell f(\vtheta)$ may change sign over the parameter landscape, its first moment does not provide a meaningful measure of its magnitude. Instead, we characterize the signal strength through its second moment. This motivates the following gradient SNR:
\begin{eqnarray}
\label{definition of population SNR}
{\mathrm{SNR}}_\ell := \frac{ \E_{\Theta} [(\partial_\ell f(\vtheta))^2] }{ \E_{\Theta} [\Var(\hat g_\ell(\vtheta) - \partial_\ell f(\vtheta))] },
\end{eqnarray}
where the expectation $\E_{\Theta} [ \cdot ]$ is taken over the parameter distribution of $\vtheta$. A high SNR indicates that the stochastic gradient carries more reliable optimization information, whereas a low SNR implies that stochastic fluctuations dominate the optimization signal. The above definition naturally leads to the following characterization of the
measurement budget required for achieving a prescribed gradient estimation quality.
\begin{cor}
\label{cor:sample complexity barren plateau}
Under the setting of \Cref{thm:main property of noisy gradient}, choose the perturbation parameter as $c^2 = \gamma\sqrt{\E_{\Theta}[(\partial_\ell f(\vtheta))^2]}$ for a constant $\gamma>0$ independent of $n$, and suppose $\E_{\Theta}[\|\nabla f(\vtheta)\|_2^2] = N\,\E_{\Theta}[(\partial_\ell f(\vtheta))^2]$. Then, for any target SNR level $\varepsilon_1>0$ satisfying
\begin{eqnarray}
\label{cor:feasibility condition}
\varepsilon_1 \;<\; \frac{1}{\big(\sqrt{N-1}+4\gamma N^3\|\vw\|_1\big)^2},
\end{eqnarray}
achieving $\mathrm{SNR}_\ell\ge\varepsilon_1$ requires a measurement budget
\begin{eqnarray}
\label{cor:sample complexity lower bound c scaled}
M = \Omega\bigg(\frac{\varepsilon_1\|\vw\|_2^2}{\big(\E_{\Theta}[(\partial_\ell f(\vtheta))^2]\big)^{3/2}}\bigg).
\end{eqnarray}
\end{cor}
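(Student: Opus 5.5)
The plan is to take the $\Theta$-expectation of the variance bound \eqref{variance of noisy gradient} in Theorem~\ref{thm:main property of noisy gradient}. After substituting the prescribed $c$ and the isotropy assumption, the denominator of $\mathrm{SNR}_\ell$ should collapse into a perfect square times the signal, plus a single measurement term. The condition $\mathrm{SNR}_\ell\ge\varepsilon_1$ then rearranges into a lower bound on $M$.

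\textbf{Step 1 (averaging).} Write $G:=\E_{\Theta}[(\partial_\ell f(\vtheta))^2]$. Since $c$ is a deterministic function of $G$, it can be pulled out of $\E_\Theta$. Taking expectations in \eqref{variance of noisy gradient} gives:
\begin{itemize}
\item the term $\E_\Theta[\|\nabla f\|_2^2-(\partial_\ell f)^2]$, which equals $(N-1)G$ by the assumption $\E_{\Theta}[\|\nabla f\|_2^2]=NG$;
\item the measurement term $\|\vw\|_2^2/(2c^2M)$;
\item the term $16c^4N^6\|\vw\|_1^2$;
\item the cross term $8c^2N^3\|\vw\|_1\,\E_\Theta\big[\sqrt{\|\nabla f\|_2^2-(\partial_\ell f)^2}\big]$.
\end{itemize}
For the cross term I will apply Jensen's inequality, $\E\sqrt{X}\le\sqrt{\E X}$, which bounds it by $8c^2N^3\|\vw\|_1\sqrt{(N-1)G}$.

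\textbf{Step 2 (substitute $c^2=\gamma\sqrt G$).} The three $c$-dependent non-measurement terms become $16\gamma^2N^6\|\vw\|_1^2G$ and $8\gamma N^3\|\vw\|_1\sqrt{N-1}\,G$. Together with $(N-1)G$, these form exactly
\begin{equation*}
G\,A^2,\qquad A:=\sqrt{N-1}+4\gamma N^3\|\vw\|_1 .
\end{equation*}
The measurement term becomes $\|\vw\|_2^2/(2\gamma M\sqrt G)$. Hence
\begin{equation*}
\mathrm{SNR}_\ell\;\ge\;\frac{G}{G A^2+\frac{\|\vw\|_2^2}{2\gamma M\sqrt G}}.
\end{equation*}
Recognizing this perfect square is the main point: it explains the form of the feasibility condition \eqref{cor:feasibility condition}. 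Even as $M\to\infty$, the bound only saturates at $1/A^2$, so $\varepsilon_1<1/A^2$ is exactly what is needed for any target to be attainable.

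\textbf{Step 3 (solve for $M$).} Requiring the right-hand side to be at least $\varepsilon_1$ is equivalent to
\begin{equation*}
G(1-\varepsilon_1A^2)\ge \frac{\varepsilon_1\|\vw\|_2^2}{2\gamma M\sqrt G}.
\end{equation*}
By \eqref{cor:feasibility condition}, $1-\varepsilon_1A^2>0$, so this rearranges to
\begin{equation*}
M\ge \frac{\varepsilon_1\|\vw\|_2^2}{2\gamma(1-\varepsilon_1A^2)\,G^{3/2}} .
\end{equation*}
This is of order $\varepsilon_1\|\vw\|_2^2/G^{3/2}$, which gives \eqref{cor:sample complexity lower bound c scaled}. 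The constant $1/\big(2\gamma(1-\varepsilon_1A^2)\big)$ is independent of $n$ as long as $\varepsilon_1$ stays a fixed fraction below $1/A^2$.

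\textbf{Main obstacle.} The only step needing care is interpretive. The argument produces a budget sufficient for the guaranteed lower bound on $\mathrm{SNR}_\ell$ to reach $\varepsilon_1$, and the $\Omega(\cdot)$ should be read as the scaling of that required budget. I will state explicitly that the displayed threshold is what the variance bound demands, and that the Jensen step only loosens the bound in the safe direction.
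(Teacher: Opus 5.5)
Your proposal is correct and follows the paper's proof essentially step for step. It averages the variance bound of Theorem~\ref{thm:main property of noisy gradient} over $\Theta$, bounds the cross term by Jensen, and substitutes $c^2=\gamma\sqrt{G}$ to collapse the non-measurement terms into $G\,(\sqrt{N-1}+4\gamma N^3\|\vw\|_1)^2$ (the paper names this square $A(N,\vw,\gamma)$). It then rearranges under the feasibility condition to get $M\ge \varepsilon_1\|\vw\|_2^2/\big(2\gamma(1-\varepsilon_1A^2)G^{3/2}\big)$, as the paper does. Your caveat also applies to the paper's proof: the argument gives a sufficient budget for the guaranteed SNR lower bound, so $\Omega(\cdot)$ describes the scaling of that budget rather than a true necessity, since the paper likewise only imposes $\varepsilon_1$ on the lower bound.
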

The proof is provided in {Appendix}~\ref{Proof of SNR lower bound requirement}. {Corollary}~\ref{cor:sample complexity barren plateau} characterizes the measurement budget required to maintain a prescribed level of gradient estimation reliability.  The exponent $3/2$ in \eqref{cor:sample complexity lower bound c scaled}, rather than the exponent $1$ expected from shot noise alone, originates from the perturbation parameter $c$. Since the finite-difference bias grows with $c$ while the measurement-noise term $\|\vw\|_2^2/(2c^2M)$ decreases with $c$, the two terms are balanced by choosing $c^2=\Theta\big(\sqrt{\E_{\Theta}[(\partial_\ell f(\vtheta))^2]}\big)$.  Thus, $c$ must shrink as the landscape flattens. This shrinkage inflates the measurement-noise term by an additional factor of $\big(\E_{\Theta}[(\partial_\ell f(\vtheta))^2]\big)^{-1/2}$, raising the exponent from $1$ to $3/2$.

We next examine how the BP phenomenon affects this requirement. The standard BP condition \cite[Eq. (5)]{larocca2025barren} is typically formulated in terms of the variance of the gradient component. To relate this characterization to the second moment used in our SNR analysis, we assume that the mean gradient vanishes under the considered parameter distribution, which is commonly satisfied due to the symmetry of the parameter ensemble. Under this assumption, the gradient variance coincides with its second moment, yielding $\E_{\Theta}[(\partial_\ell f(\vtheta))^2]=O(2^{-n})$, where $n$ denotes the number of qubits.  Under this BP scaling, the measurement complexity bound in \eqref{cor:sample complexity lower bound c scaled} becomes $M=\Omega(2^{3n/2})$. This result indicates that BP impose a fundamental measurement overhead on SPSA-based optimization. As the gradient energy decays exponentially with the system size, maintaining a fixed SNR requires an exponentially increasing measurement budget. Therefore, BPs not only slow down optimization due to vanishing gradients, but also increase the statistical cost of obtaining reliable gradient estimates in algorithms.

\paragraph*{Convergence Guarantee} The SNR analysis above characterizes the statistical reliability of the SPSA gradient estimator at each iteration. However, a sufficiently accurate gradient estimate alone does not directly characterize the optimization performance, since the stochastic gradient errors accumulate throughout the iterative optimization process. To establish a complete understanding of SPSA in VQE optimization, we next analyze the convergence behavior of the resulting optimization trajectory and quantify how the estimator accuracy affects the number of iterations and measurement resources required to approach a stationary region.

\begin{theorem}
\label{thm: upper bound of the gradient energy}
Consider the VQE objective  $f(\vtheta) = \langle \vphi_0| \mU^\dagger(\vtheta)\mH\mU(\vtheta) |\vphi_0\rangle$, where the Hamiltonian admits the decomposition $\mH=\sum_{\alpha=1}^{L}w_\alpha\mP_\alpha$ and let $\vw = \begin{bmatrix} w_1 & \cdots & w_L \end{bmatrix}^\top\in\R^{L}$
Under the Hessian Lipschitz property established in Proposition~\ref{Proposition: Hessian Property} and the measurement noise model in Lemma~\ref{lem:noise-model}, we consider the following parameters for the SPSA estimator in \eqref{SPSA iteration}:
\begin{eqnarray}
\label{constants setting main paper}
\mu_t=\mu_0T^{-1/2},\qquad
c_t=c_0T^{-1/8},\qquad
M_t=M_0T^{1/4},
\end{eqnarray}
where
\begin{eqnarray}
\label{constants setting requirements main paper}
\mu_0\le \frac{1}{4(3N^2+N)\|\vw\|_1},\qquad
c_0>0,\qquad
M_0>0.
\end{eqnarray}
Then, after $T$ iterations, the minimum expected gradient energy among the generated iterates satisfies
\begin{eqnarray}
\label{expansion of iteration2 another4 upper bound energy2 main paper}
\min_{0\leq t \leq T-1}\E[\|\nabla f(\vtheta_t)\|_2^2] \leq \kappa T^{-\frac{1}{2}}.
\end{eqnarray}
establishing the $O(T^{-1/2})$ convergence rate. Here $\kappa = \frac{8 \|\vw\|_1}{\mu_0} + \frac{4N^2\|\vw\|_1\|\vw\|_2^2 \mu_0}{c_0^2 M_0} + 64 N^{7}\|\vw\|_1^2 c_0$. To achieve a relative gradient energy level of $\epsilon_2>0$, we require $\min_{0\leq t \leq T-1}\E[\|\nabla f(\vtheta_t)\|_2^2] \leq O(\epsilon_2 \E_{\Theta}[\|\nabla f(\vtheta)\|_2^2])$. A sufficient condition on the number of iterations to satisfy this requirement is
\begin{eqnarray}
\label{the necessary condition for the number of iterations main paper}
T = \Omega\bigg(\frac{\kappa^2,}{(\epsilon_2)^2 (\E_{\Theta}[\|\nabla f(\vtheta)\|_2^2])^2 } \bigg),
\end{eqnarray}
This iteration bound, together with the measurement cost incurred by the SPSA estimator at each iteration, yields the following sufficient bound on the total measurement budget:
\begin{eqnarray}
\label{the necessary condition for the number of measurements shots for SPSA main paper}
N_{\textup{SPSA}} = \Omega\bigg(\frac{L M_0\kappa^{\frac{5}{2}}}{ (\epsilon_2)^{\frac{5}{2}} (\E_{\Theta}[\|\nabla f(\vtheta)\|_2^2])^{\frac{5}{2}}} \bigg).
\end{eqnarray}
Here, $\E[\cdot]$ denotes the expectation with respect to the algorithmic randomness generated throughout the SPSA optimization process, including the perturbations $\{\mDelta_t\}_{t=0}^{T-1}$ and the measurement noises $\{\xi(\vtheta_t)\}_{t=0}^{T-1}$. In contrast, $\E_{\Theta}[\cdot]$ denotes the expectation over the parameter distribution of $\vtheta$.
\end{theorem}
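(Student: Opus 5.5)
The plan is the standard nonconvex SGD descent-lemma argument, driven by the bias and variance bounds of \Cref{thm:main property of noisy gradient}.

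\textbf{Smoothness and boundedness.} First I would record that $f$ is $L_1$-smooth with $L_1\le 2N\|\vw\|_1$. Differentiating $e^{-i\theta_\ell\mG_\ell}$ twice gives nested commutators with the generators. Each entry of the Hessian is therefore bounded by $4\|\vw\|_1$, so $\|\nabla^2 f\|\le 4N\|\vw\|_1$. The constant $(3N^2+N)$ in the restriction on $\mu_0$ suggests that a slightly looser bound, combining this with the second-moment factor $N$ appearing below, is what gets absorbed. I would also use $|f|\le\|\vw\|_1$, so that $f(\vtheta_0)-\inf f\le 2\|\vw\|_1$.

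\textbf{One-step descent.} The descent lemma gives
\[
f(\vtheta_{t+1})\le f(\vtheta_t)-\mu_t\langle\nabla f(\vtheta_t),\hat\vg_t\rangle+\tfrac{L_1\mu_t^2}{2}\|\hat\vg_t\|_2^2 .
\]
Taking the conditional expectation given $\vtheta_t$, I would write $\E[\hat\vg_t]=\nabla f+\vb_t$. By \eqref{bias of noisy gradient}, $\|\vb_t\|_2\le 4c_t^2N^{3.5}\|\vw\|_1$.

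The cross term is then handled by Young's inequality:
\[
-\langle\nabla f,\nabla f+\vb_t\rangle\le -\tfrac12\|\nabla f\|^2+\tfrac12\|\vb_t\|^2 .
\]

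For the second moment, I would sum \eqref{variance of noisy gradient} over $\ell$ and add the squared mean:
\[
\E\|\hat\vg_t\|^2\le \|\nabla f+\vb_t\|^2+(N-1)\|\nabla f\|^2+\tfrac{N\|\vw\|_2^2}{2c_t^2M_t}+16c_t^4N^7\|\vw\|_1^2+8c_t^2N^{4}\|\vw\|_1\|\nabla f\|.
\]
Here I would bound $\|\nabla f\|\le 2\sqrt N\|\vw\|_1$ in the last term. This is an $O(N\|\nabla f\|^2)$ coefficient, and the step-size condition $\mu_0\le 1/(4(3N^2+N)\|\vw\|_1)$ is exactly what lets $\tfrac{L_1\mu_t}{2}\cdot O(N)\le\tfrac14$. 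The result is
\[
\E f(\vtheta_{t+1})\le \E f(\vtheta_t)-\tfrac{\mu_t}{4}\E\|\nabla f(\vtheta_t)\|^2+\mu_t\,O(c_t^4N^7\|\vw\|_1^2)+\mu_t^2\,O\!\Big(\tfrac{N^2\|\vw\|_1\|\vw\|_2^2}{c_t^2M_t}+c_t^2N^{4.5}\|\vw\|_1^2\Big).
\]

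\textbf{Telescoping and the rate.} Summing over $t=0,\dots,T-1$ with constant $\mu_t$ and dividing by $T\mu/4$ gives
\[
\min_t\E\|\nabla f\|^2\le \frac{8\|\vw\|_1}{\mu_0\sqrt T}+\frac{4\mu_0N^2\|\vw\|_1\|\vw\|_2^2}{c_0^2M_0}\cdot\frac{T^{-1/2}T^{1/4}}{T^{1/4}}+O(N^7\|\vw\|_1^2)\big(c_0^4T^{-1/2}+\mu_0c_0^2T^{-3/4}\big).
\]
The schedule is chosen so that every term is $O(T^{-1/2})$. In particular $c_t^4=c_0^4T^{-1/2}$, and $\mu_t/(c_t^2M_t)=\mu_0T^{-1/2}/(c_0^2M_0)$. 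Collecting constants, using crude bounds on $c_0$-powers to match the stated $64N^7\|\vw\|_1^2c_0$ term, yields $\kappa T^{-1/2}$.

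\textbf{Corollaries.} Setting $\kappa T^{-1/2}\le\epsilon_2\E_\Theta\|\nabla f\|^2$ gives the bound on $T$. The total shot count is $2LM_0T^{1/4}\cdot T=2LM_0T^{5/4}$. Substituting the bound on $T$ gives the exponent $5/2$.

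\textbf{Main obstacle.} The delicate step is the second-moment bound. One must turn the per-coordinate variance \eqref{variance of noisy gradient} into an $\|\hat\vg\|^2$ bound whose $\|\nabla f\|^2$ coefficient, of order $N$, is absorbed by the step-size restriction. The mixed term $c^2\sqrt{\cdot}$ must also be treated carefully. Matching the exact constants in $\kappa$, especially the linear $c_0$ dependence, may require bounding $c_0^4$ and $c_0^2$ terms jointly, or assuming $c_0\le1$. I would accept looser constants there if needed.
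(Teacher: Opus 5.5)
Your proposal is essentially the paper's proof: the descent lemma with the $4N\|\vw\|_1$-Lipschitz gradient of Lemma~\ref{lem:smooth}, Young's inequality on the cross term, the bias--variance split of $\E\|\hat\vg\|_2^2$ via Theorem~\ref{thm:main property of noisy gradient}, absorption through the step-size condition, and telescoping. The only deviations are that you bound the mixed term via $\|\nabla f\|_2\le 2\sqrt N\|\vw\|_1$ where the paper uses AM--GM, which adds only a harmless $O(\mu_0c_0^2T^{-3/4})$ term, and that your ``$L_1\le 2N\|\vw\|_1$'' should read $4N\|\vw\|_1$; the absorption still goes through because your $\|\nabla f\|_2^2$ coefficient $N+1$ is at most the paper's $(3N+1)/2$. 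Your suspicion about the linear $c_0$ is justified: the paper asserts $\sum_{t=0}^{T-1}c_t^4\mu_t=c_0\mu_0$, but with $c_t=c_0T^{-1/8}$ this sum equals $c_0^4\mu_0$, so the last term of $\kappa$ should be $64N^7\|\vw\|_1^2c_0^4$, and the stated form follows only when $c_0\le 1$.
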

The proof is provided in {Appendix}~\ref{Proof of upper bound for gradient}. Theorem~\ref{thm: upper bound of the gradient energy} provides several important insights into the behavior of SPSA-based VQE optimization. First, the result establishes an explicit convergence rate for the expected gradient energy. Specifically, it shows that the gradient energy of at least one iterate generated by SPSA decreases at the rate of $O(T^{-1/2})$ with respect to the number of optimization iterations. The constant $\kappa$ captures the effects of the Hamiltonian coefficients, the SPSA perturbation scale, the step size, and the measurement budget. This result characterizes how the optimization trajectory approaches a stationary region as the number of iterations increases.

Second, we measure convergence relative to the intrinsic gradient energy scale $\E_{\Theta}[\|\nabla f(\vtheta)\|_2^2]$, which enables a fair comparison across VQE instances with different gradient magnitudes. Since a vanishing gradient is a necessary condition for reaching a critical point, this relative criterion characterizes the suppression of the optimization signal during SPSA iterations. When the intrinsic gradient energy becomes smaller, achieving the same relative reduction requires a larger number of optimization
iterations according to the convergence guarantee, which subsequently increases the total measurement budget. Furthermore, compared with the SNR-based measurement requirement, whose complexity scales with the exponent $3/2$, the convergence analysis introduces an additional factor due to the accumulation of measurement costs over multiple SPSA iterations, resulting in the exponent increasing to $5/2$.

Finally, under the BP regime, the intrinsic gradient energy decays exponentially with the number of qubits, namely, $\E_{\Theta}[\|\nabla f(\vtheta)\|_2^2]=O(N2^{-n})$. Substituting this scaling into the iteration bound shows that the number of iterations required to achieve a fixed relative gradient energy level grows exponentially with the system size. Moreover, since the total measurement budget scales polynomially with the inverse of the intrinsic gradient energy, the exponentially decaying gradient energy likewise leads to an exponentially increasing measurement budget. These results show that the gradient energy of the underlying landscape provides a fundamental scale governing both the iteration complexity and measurement requirements of SPSA. In particular, flatter landscapes require not only more optimization iterations but also substantially more measurement resources to achieve the same relative optimization accuracy.

\section{Simulation}
\label{sec: simulation}

In this section, we investigate the effect of the intrinsic gradient energy on the SPSA optimization. We first consider the standard SPSA method defined in \eqref{SPSA iteration}, where the parameters are unconstrained. As the system size increases, the standard SPSA setting can exhibit the BP phenomenon, under which the intrinsic gradient energy decays exponentially with the system size. To obtain a contrasting setting with mitigated BP, we consider an SPSA-iterative hard thresholding (SPSA-IHT) scheme that constrains the parameters to a bounded region. In particular, \cite[Theorem 1]{wang2024trainability} shows that restricting each parameter $\theta_\ell$ independently and uniformly to $[-a\pi,a\pi]$, with a suitable choice of $a$, can mitigate BP. For the Hamiltonian decomposition $\mH=\sum_{\alpha=1}^{L}w_\alpha\mP_\alpha$, the corresponding intrinsic gradient energy satisfies $\E_{\Theta}[\|\nabla f(\vtheta)\|_2^2]=\Omega( \frac{N L (1+1/S)^{S+1}}{(N+1)^{S+1}} )$, where $S$ denotes the maximum number of non-identity Pauli operators among the Pauli strings $\{\mP_\alpha\}_{\alpha=1}^{L}$. Following this result, we set the parameter range in SPSA-IHT to $[-\pi/N,\pi/N]$ and project each updated parameter onto this interval after every SPSA iteration.  Both SPSA and SPSA-IHT use the same computational-basis reference state with all qubits initialized to zero and start from the same randomly initialized parameter vector. To facilitate effective optimization, we choose hyperparameters as $\mu_t = 0.1\times t^{-1/2}$ and $c_t=0.5\times t^{-1/8}$, where $t$ indicates the iteration index. In all numerical experiments, we use a fixed measurement budget of $M=1000$ shots per Pauli term for each objective function evaluation.

\begin{figure*}[ht]
\centering
\subfigure[]{
\begin{minipage}[t]{0.43\textwidth}
\centering
\includegraphics[width=\textwidth]{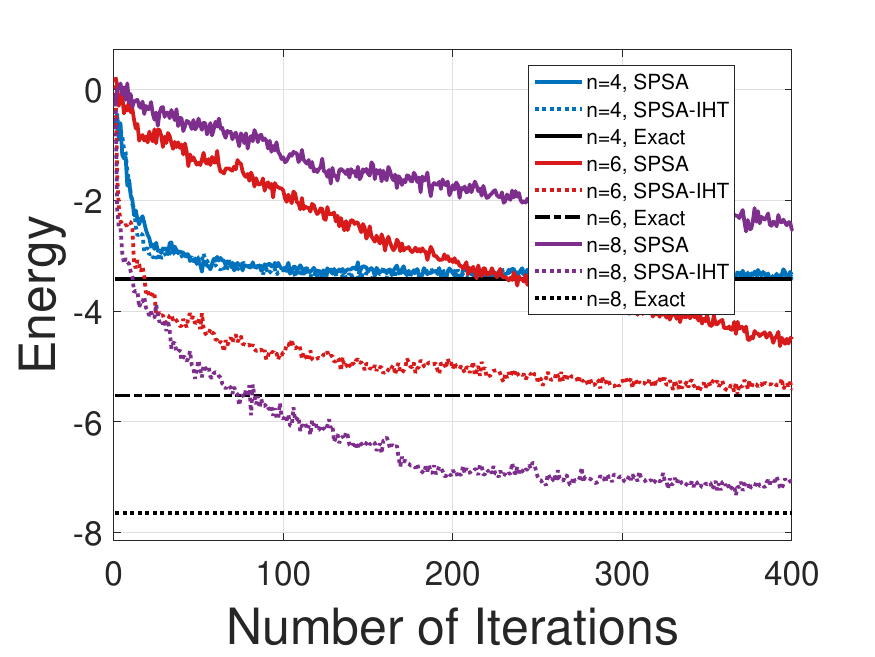}
\end{minipage}
\label{TIM_different_n_qubit}
}
\subfigure[]{
\begin{minipage}[t]{0.43\textwidth}
\centering
\includegraphics[width=\textwidth]{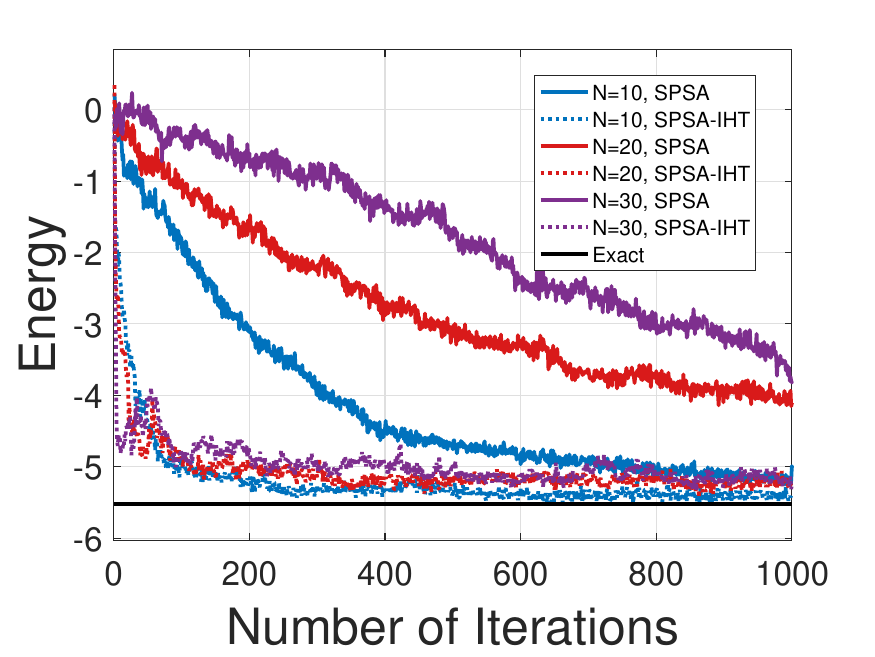}
\end{minipage}
\label{TIM_different_N_layers}
}
\subfigure[]{
\begin{minipage}[t]{0.43\textwidth}
\centering
\includegraphics[width=\textwidth]{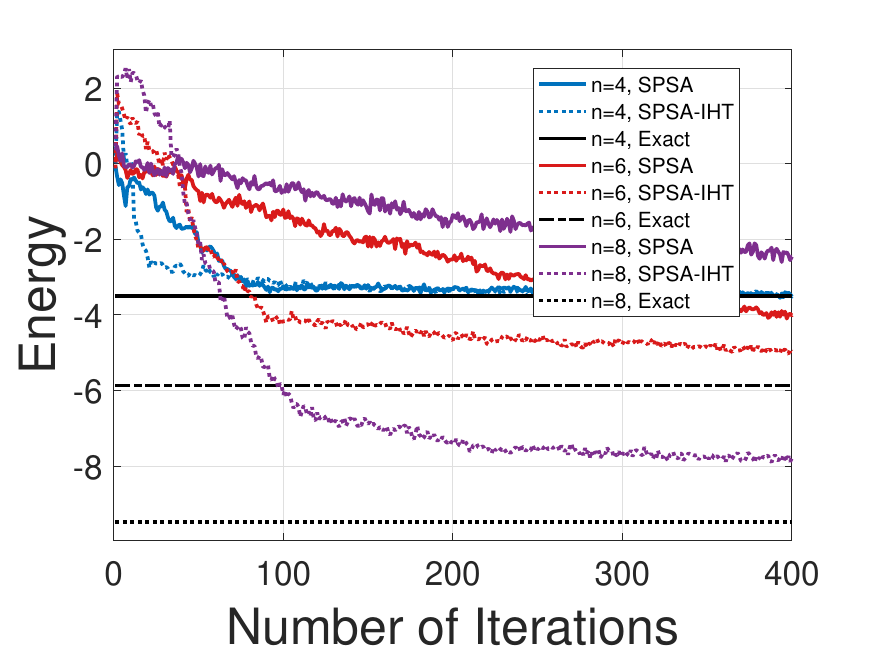}
\end{minipage}
\label{XYZ_different_n_qubit}
}
\subfigure[]{
\begin{minipage}[t]{0.43\textwidth}
\centering
\includegraphics[width=\textwidth]{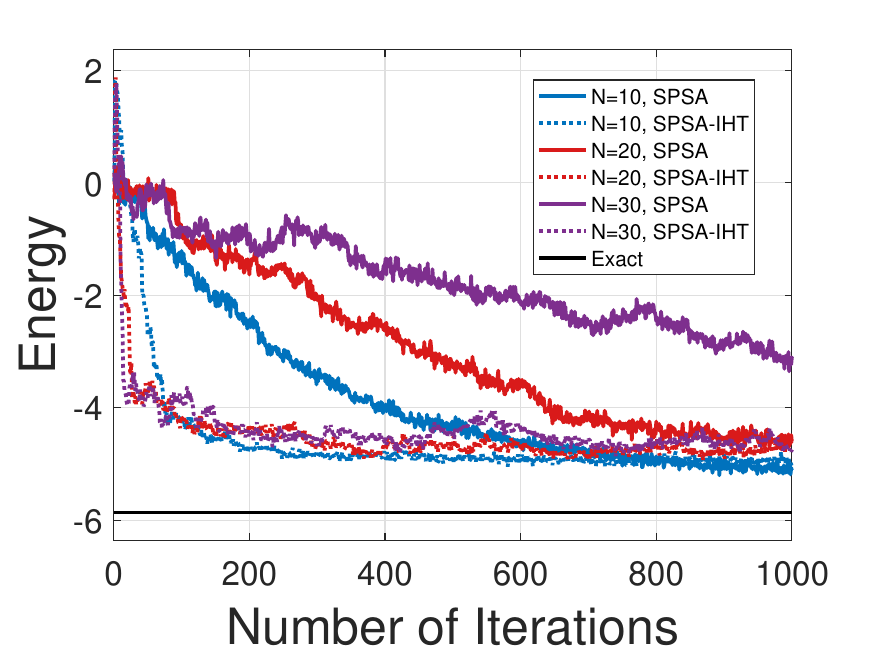}
\end{minipage}
\label{XYZ_different_N_layers}
}
\caption{Convergence of SPSA and SPSA-IHT for (a,b) the TIM and (c,d) the XYZ model. In (a) and (c), $N=10$ is fixed while $n$ varies; in (b) and (d), $n=6$ is fixed while $N$ varies.}
\label{comparison summary}
\end{figure*}

We consider two representative Hamiltonians in our numerical experiments. The first is the transverse-field Ising model (TIM), $\mH_{\mathrm{TIM}}=-J\sum_{i=1}^{n-1}\mZ_i\mZ_{i+1}-h\sum_{i=1}^{n}\mX_i$, where $\mX_i$ and $\mZ_i$ denote the Pauli-$\mX$ and Pauli-$\mZ$ operators acting on the $i$-th qubit, respectively. We set $J=1$ and $h=0.5$ throughout the simulations. The second is a disordered Heisenberg-type XYZ model, $\mH_{\mathrm{XYZ}}=-\sum_{i=1}^{n-1}(\mZ_i\mZ_{i+1}+J_i^x\mX_i\mX_{i+1}+J_i^y\mY_i\mY_{i+1})-\sum_{i=1}^{n}h_i\mZ_i$, where $\mY_i$ denotes the Pauli-$\mY$ operator acting on the $i$-th qubit. The coupling coefficients and local fields are independently sampled as $J_i^x,J_i^y,h_i\overset{\mathrm{i.i.d.}}{\sim}\calN(0,0.5^2)$, with $i=1,\ldots,n-1$ for $J_i^x,J_i^y$ and $i=1,\ldots,n$ for $h_i$. To prepare the variational state, we adopt the RealAmplitudes ansatz with the linear entanglement scheme \cite{kandala2017hardware}, which uses only the parametrized rotation gate $\textup{R}_\textup{Y}(\theta)$ together with a linear chain of CNOT gates connecting nearest-neighbor qubits. In \Cref{comparison summary}, we plot the variational energy throughout the optimization. At each iteration, the energy is estimated from Pauli-operator measurements using a fixed measurement budget per Pauli term, thereby incorporating the statistical fluctuations arising from finite-shot measurements.  The exact ground-state energy, obtained by direct diagonalization, is included as a reference to assess the energy convergence of the two optimization methods. As shown in \Cref{TIM_different_n_qubit,XYZ_different_n_qubit}, increasing the number of qubits $n$ leads to slower convergence for both methods. However, the degradation is substantially less pronounced for SPSA-IHT. In \Cref{TIM_different_N_layers,XYZ_different_N_layers}, increasing the number of layers $N$ similarly results in a pronounced slowdown for SPSA, whereas the convergence behavior of SPSA-IHT remains relatively stable. These observations are consistent with our theoretical results: under a fixed measurement budget, stronger suppression of the intrinsic gradient energy leads to slower optimization, while maintaining a larger gradient energy mitigates the degradation in convergence. Overall, the numerical results demonstrate that the optimization landscape, through its intrinsic gradient energy, directly influences the convergence behavior of SPSA under finite-shot measurements.

\section{Conclusion}
\label{sec: Conclusion}

In this paper, we establish a theoretical framework for understanding how the BP affects  SPSA optimization in VQEs under finite-shot measurements. By characterizing the statistical properties of the SPSA gradient estimator, quantifying gradient reliability through an SNR analysis, and analyzing the resulting convergence behavior, we establish how the landscape-induced decay of gradient energy translates into both the iteration complexity and the measurement budget required to achieve a prescribed optimization accuracy. In particular, we show that exponentially vanishing gradient energy leads to exponential growth in both the required number of iterations and the total measurement budget. These findings provide a quantitative understanding of how the structure of BP influences the optimization complexity of SPSA, linking the geometry of the optimization landscape to the computational resources required for practical variational quantum optimization.

\section{Acknowledgments}
\label{sec: ack}

ZQ gratefully acknowledges support from the MICDE Research Scholars Program at the University of Michigan.

\newpage

\appendices

\section{Proof of Proposition \ref{Proposition: Hessian Property}}
\label{Proof of Hessian property}

Before presenting the proof, we first introduce several preliminary definitions and identities that will be used throughout the derivation.
For each Pauli term $\mP_\alpha$, define $f_\alpha(\vtheta):=\langle\vphi_0|\mU^\dagger(\vtheta)\mP_\alpha \mU(\vtheta)|\vphi_0\rangle$ with $\mU(\vtheta) = \mU_N(\theta_N),\ldots,\mU_1(\theta_1)$. Then the objective function admits the decomposition $ f(\vtheta) = \sum_{\alpha=1}^{L}w_\alpha f_\alpha(\vtheta) $.
For convenience, we further introduce the effective generator
\begin{eqnarray}
\label{dressed generator}
\wt \mG_\ell(\vtheta):=i\,(\partial_\ell \mU(\vtheta))\,\mU^\dagger(\vtheta),
\end{eqnarray}
where $\partial_\ell \mU(\vtheta) =
\partial\mU(\vtheta)/\partial\theta_\ell$ denotes the partial derivative of $\mU(\vtheta)$ with respect to $\theta_\ell$. Right-multiplying \eqref{dressed generator} by
$\mU(\vtheta)$ gives
\begin{eqnarray}
\label{equation of partial U}
&\!\!\!\!\!\!\!\!&\wt \mG_\ell(\vtheta) \mU(\vtheta) = i\,(\partial_\ell \mU(\vtheta))\,\mU^\dagger(\vtheta) \mU(\vtheta) = i\,(\partial_\ell \mU(\vtheta))\nonumber\\
&\!\!\!\!\Longrightarrow\!\!\!\!& \partial_\ell \mU(\vtheta) = - i \wt \mG_\ell(\vtheta) \mU(\vtheta).
\end{eqnarray}
Differentiating the identity $\mU(\vtheta)\mU^\dagger(\vtheta)=\mId$ with respect to $\theta_\ell$ yields
\begin{eqnarray}
\label{partial derivation of UU}
(\partial_\ell \mU(\vtheta))\mU^\dagger(\vtheta) + \mU(\vtheta)(\partial_\ell \mU^\dagger(\vtheta)) = {\bf 0}.
\end{eqnarray}
Substituting \eqref{equation of partial U} into \eqref{partial derivation of UU}, we obtain
\begin{eqnarray}
\label{equation of partial U dagger}
&\!\!\!\!\!\!\!\!&(-i\wt \mG_\ell(\vtheta) \mU(\vtheta))\mU^\dagger(\vtheta)  + \mU(\vtheta)(\partial_\ell \mU^\dagger(\vtheta))={\bf 0}\nonumber\\
&\!\!\!\!\Longrightarrow\!\!\!\!& \mU(\vtheta)(\partial_\ell \mU^\dagger(\vtheta)) =  i\wt \mG_\ell(\vtheta) \nonumber\\
&\!\!\!\!\Longrightarrow\!\!\!\!& \mU^\dagger(\vtheta)\mU(\vtheta)(\partial_\ell \mU^\dagger(\vtheta)) =  i\mU^\dagger(\vtheta)\wt \mG_\ell(\vtheta) \nonumber\\
&\!\!\!\!\Longrightarrow\!\!\!\!& \partial_\ell \mU^\dagger(\vtheta) =  i\mU^\dagger(\vtheta)\wt \mG_\ell(\vtheta).
\end{eqnarray}
We are now ready to prove Proposition~\ref{Proposition: Hessian Property}.

\begin{proof}
\medskip
\noindent\textbf{Step 1: First-order derivative formula.}

Differentiating $f_\alpha(\vtheta)$ with respect to $\theta_\ell$ yields
\begin{eqnarray}
\label{first derivation of f}
\partial_\ell f_\alpha(\vtheta) = \langle\vphi_0|(\partial_\ell\mU^\dagger(\vtheta)) \mP_\alpha \mU(\vtheta)|\vphi_0\rangle + \langle\vphi_0|\mU^\dagger(\vtheta)\mP_\alpha \partial_\ell\mU(\vtheta)|\vphi_0\rangle
\end{eqnarray}
Substituting \eqref{equation of partial U} and  \eqref{equation of partial U dagger} into \eqref{first derivation of f}, we obtain
\begin{eqnarray}
\label{first derivation of f1}
\partial_\ell f_\alpha(\vtheta) &\!\!\!\!=\!\!\!\!& i\langle\vphi_0|\mU^\dagger(\vtheta)\wt \mG_\ell(\vtheta) \mP_\alpha \mU(\vtheta)|\vphi_0\rangle - i \langle\vphi_0|\mU^\dagger(\vtheta)\mP_\alpha \wt \mG_\ell(\vtheta) \mU(\vtheta)|\vphi_0\rangle\nonumber\\
&\!\!\!\!=\!\!\!\!& i\langle\vphi_0|\mU^\dagger(\vtheta) [\wt \mG_\ell(\vtheta), \mP_\alpha] \mU(\vtheta)|\vphi_0\rangle,
\end{eqnarray}
where $[\wt \mG_\ell(\vtheta), \mP_\alpha] := \wt \mG_\ell(\vtheta) \mP_\alpha - \mP_\alpha \wt \mG_\ell(\vtheta)$ denotes the commutator.

\medskip
\noindent\textbf{Step 2: Explicit expression for $\wt{\mG}_\ell$.}

To derive an explicit expression for the effective generator, we decompose $\mU(\vtheta) = \mR_{\ell}(\vtheta) \mU_{\ell}(\theta_{\ell}) \mL_{\ell}(\vtheta)$ where $\mR_{\ell}(\vtheta) := \mU_N(\theta_N)\cdots \mU_{\ell+1}(\theta_{\ell+1})$ and $\mL_{\ell}(\vtheta):= \mU_{\ell-1}(\theta_{\ell-1})\cdots \mU_1(\theta_1)$. Using $\mU_\ell(\theta_\ell)=e^{-i\theta_\ell \mG_\ell}$, we obtain
\begin{eqnarray}
\label{derivation of Ul in U}
\partial_{\ell} \mU(\vtheta) = \mR_{\ell}(\vtheta) \partial_{\ell}\mU_{\ell}(\theta_{\ell}) \mL_{\ell}(\vtheta) = -i \mR_{\ell}(\vtheta) \mG_\ell \mU_\ell(\theta_\ell) \mL_{\ell}(\vtheta) = -i \mR_{\ell}(\vtheta) \mG_\ell \mR_{\ell}^\dagger(\vtheta) \mU(\vtheta).
\end{eqnarray}
Comparing the above identity with \eqref{equation of partial U}, we conclude that
\begin{eqnarray}
\label{equation of wt G}
\wt \mG_\ell(\vtheta) = \mR_{\ell}(\vtheta) \mG_\ell \mR_{\ell}^\dagger(\vtheta).
\end{eqnarray}
Since $\mR_\ell(\vtheta)$ is unitary, it follows that
\begin{eqnarray}
\label{spectral norm of wt G}
\|\wt \mG_\ell(\vtheta)\|\leq \|\mR_{\ell}(\vtheta)\| \|\mG_\ell\| \|\mR_{\ell}^\dagger(\vtheta)\|=1.
\end{eqnarray}
Moreover, since $\wt \mG_\ell(\vtheta)$ depends only on $\theta_{\ell+1},\ldots,\theta_N$, we have
\begin{eqnarray}
\label{derivation of wt G 1}
\partial_a \wt \mG_\ell(\vtheta) \equiv 0,\qquad a\le\ell.
\end{eqnarray}
For $a>\ell$, differentiating $\mR_\ell(\vtheta)$ yields
\begin{eqnarray}
\label{derivation of wt G 2}
\partial_a \mR_{\ell}(\vtheta) &\!\!\!\!=\!\!\!\!& -i \mU_N(\theta_N)  \cdots  \mU_{a+1}(\theta_{a+1})  \mG_a \mU_a(\theta_a)   \cdots  \mU_{\ell+1}(\theta_{\ell+1})\nonumber\\
&\!\!\!\!=\!\!\!\!& - i \wt \mG_a(\vtheta)\mR_{\ell}(\vtheta).
\end{eqnarray}
Taking the Hermitian transpose gives
\begin{eqnarray}
\label{derivation of wt G 3}
\partial_a \mR_{\ell}^\dagger(\vtheta) = (\partial_a \mR_{\ell}(\vtheta))^\dagger = i \mR_{\ell}^\dagger(\vtheta) \wt \mG_a(\vtheta).
\end{eqnarray}
Substituting \eqref{derivation of wt G 2} and \eqref{derivation of wt G 3} into the derivative of \eqref{equation of wt G} gives
\begin{eqnarray}
\label{derivation of wt G 4}
\partial_a \wt \mG_\ell(\vtheta) &\!\!\!\!=\!\!\!\!& \partial_a \mR_{\ell}(\vtheta) \mG_\ell \mR_{\ell}^\dagger(\vtheta) + \mR_{\ell}(\vtheta) \mG_\ell \partial_a \mR_{\ell}^\dagger(\vtheta)\nonumber\\
&\!\!\!\!=\!\!\!\!& -i \wt \mG_a(\vtheta)\mR_{\ell}(\vtheta) \mG_\ell \mR_{\ell}^\dagger(\vtheta) + i \mR_{\ell}(\vtheta) \mG_\ell \mR_{\ell}^\dagger(\vtheta) \wt \mG_a(\vtheta)\nonumber\\
&\!\!\!\!=\!\!\!\!& i \wt \mG_\ell(\vtheta) \wt \mG_a(\vtheta) - i \wt \mG_a(\vtheta) \wt \mG_\ell(\vtheta)\nonumber\\
&\!\!\!\!=\!\!\!\!& i [ \wt \mG_\ell(\vtheta),  \wt \mG_a(\vtheta) ].
\end{eqnarray}
Consequently, we can derive
\begin{eqnarray}
\label{derivation of wt G bound}
\|\partial_a \wt \mG_\ell(\vtheta)\| = \|[ \wt \mG_\ell(\vtheta),  \wt \mG_a(\vtheta) ]\|\leq 2.
\end{eqnarray}

\medskip
\noindent\textbf{Step 3: Second-order derivative formula.}

Differentiating \eqref{first derivation of f1} with respect to $\theta_j$ yields
\begin{eqnarray}
\label{second derivation of f1}
\partial_{\ell j} f_\alpha(\vtheta) &\!\!\!\!=\!\!\!\!&    i\langle\vphi_0|\partial_{j}\mU^\dagger(\vtheta) [\wt \mG_\ell(\vtheta), \mP_\alpha] \mU(\vtheta)|\vphi_0\rangle + i\langle\vphi_0|\mU^\dagger(\vtheta) [\partial_{j}\wt \mG_\ell(\vtheta), \mP_\alpha] \mU(\vtheta)|\vphi_0\rangle\nonumber\\
&\!\!\!\!\!\!\!\!&  + i\langle\vphi_0|\mU^\dagger(\vtheta) [\wt \mG_\ell(\vtheta), \mP_\alpha] \partial_{j}\mU(\vtheta)|\vphi_0\rangle \nonumber\\
&\!\!\!\!=\!\!\!\!& -\langle\vphi_0|\mU^\dagger(\vtheta)\wt \mG_j(\vtheta) [\wt \mG_\ell(\vtheta), \mP_\alpha] \mU(\vtheta)|\vphi_0\rangle + i\langle\vphi_0|\mU^\dagger(\vtheta) [\partial_{j}\wt \mG_\ell(\vtheta), \mP_\alpha] \mU(\vtheta)|\vphi_0\rangle\nonumber\\
&\!\!\!\!\!\!\!\!&  +  \langle\vphi_0|\mU^\dagger(\vtheta) [\wt \mG_\ell(\vtheta), \mP_\alpha] \wt \mG_j(\vtheta) \mU(\vtheta) |\vphi_0\rangle\nonumber\\
&\!\!\!\!=\!\!\!\!& -\langle\vphi_0|\mU^\dagger(\vtheta) [\wt \mG_j(\vtheta),  [\wt \mG_\ell(\vtheta), \mP_\alpha] ] \mU(\vtheta)|\vphi_0\rangle + i\langle\vphi_0|\mU^\dagger(\vtheta) [\partial_{j}\wt \mG_\ell(\vtheta), \mP_\alpha] \mU(\vtheta)|\vphi_0\rangle\nonumber\\
&\!\!\!\!=\!\!\!\!& -\langle\vphi_0|\mU^\dagger(\vtheta) [\wt \mG_j(\vtheta),  [\wt \mG_\ell(\vtheta), \mP_\alpha] ] \mU(\vtheta)|\vphi_0\rangle, \ \ell\geq j,
\end{eqnarray}
where the second equality follows from \eqref{equation of partial U} and \eqref{equation of partial U dagger}, and the last equality follows from \eqref{derivation of wt G 1}.

For arbitrary indices $\ell,j\in\{1,\ldots,N\}$, applying \eqref{second derivation of f1} to the ordered pair $(\max\{\ell,j\},\min\{\ell,j\})$ and invoking Clairaut's theorem,
we conclude that \eqref{second derivation of f1} holds for all $\ell,j\in\{1,\ldots,N\}$.

\medskip
\noindent\textbf{Step 4: Third-order derivative formula.}

Considering $\ell \geq j$, we differentiate \eqref{second derivation of f1} with respect to $\theta_k$ and have
\begin{eqnarray}
\label{third derivation of f1}
&\!\!\!\!\!\!\!\!& \partial_{\ell j k} f_\alpha(\vtheta)\nonumber\\
&\!\!\!\!=\!\!\!\!& -\langle\vphi_0|\partial_{k}\mU^\dagger(\vtheta) [\wt \mG_j(\vtheta),  [\wt \mG_\ell(\vtheta), \mP_\alpha] ] \mU(\vtheta)|\vphi_0\rangle -\langle\vphi_0|\mU^\dagger(\vtheta) [\partial_{k}\wt \mG_j(\vtheta),  [\wt \mG_\ell(\vtheta), \mP_\alpha] ] \mU(\vtheta)|\vphi_0\rangle  \nonumber\\
&\!\!\!\!\!\!\!\!& -\langle\vphi_0|\mU^\dagger(\vtheta) [\wt \mG_j(\vtheta),  [\partial_{k}\wt \mG_\ell(\vtheta), \mP_\alpha] ] \mU(\vtheta)|\vphi_0\rangle -\langle\vphi_0|\mU^\dagger(\vtheta) [\wt \mG_j(\vtheta),  [\wt \mG_\ell(\vtheta), \mP_\alpha] ] \partial_{k}\mU(\vtheta)|\vphi_0\rangle\nonumber\\
&\!\!\!\!=\!\!\!\!& -i \langle\vphi_0| \mU^\dagger(\vtheta) \wt \mG_k(\vtheta) [\wt \mG_j(\vtheta),  [\wt \mG_\ell(\vtheta), \mP_\alpha] ] \mU(\vtheta)|\vphi_0\rangle + i \langle\vphi_0|\mU^\dagger(\vtheta) [\wt \mG_j(\vtheta),  [\wt \mG_\ell(\vtheta), \mP_\alpha] ] \wt \mG_k(\vtheta) \mU(\vtheta)|\vphi_0\rangle\nonumber\\
&\!\!\!\!\!\!\!\!& -\langle\vphi_0|\mU^\dagger(\vtheta) [\partial_{k}\wt \mG_j(\vtheta),  [\wt \mG_\ell(\vtheta), \mP_\alpha] ] \mU(\vtheta)|\vphi_0\rangle -\langle\vphi_0|\mU^\dagger(\vtheta) [\wt \mG_j(\vtheta),  [\partial_{k}\wt \mG_\ell(\vtheta), \mP_\alpha] ] \mU(\vtheta)|\vphi_0\rangle\nonumber\\
&\!\!\!\!=\!\!\!\!& -i \langle\vphi_0| \mU^\dagger(\vtheta) [\wt \mG_k(\vtheta),  [\wt \mG_j(\vtheta),  [\wt \mG_\ell(\vtheta), \mP_\alpha] ] ] \mU(\vtheta)|\vphi_0\rangle, \ \ell\geq j \geq k,
\end{eqnarray}
where the second equality follows from \eqref{equation of partial U} and \eqref{equation of partial U dagger}, and the last equality follows from \eqref{derivation of wt G 1}.

Since $\partial_{\ell jk}f_\alpha(\vtheta)$ is invariant under permutations of the differentiation order by Clairaut's theorem, \eqref{third derivation of f1} holds for all $\ell,j,k\in\{1,\ldots,N\}$. Furthermore, we have
\begin{eqnarray}
\label{spectral norm third derivation of f2}
|\partial_{\ell j k} f(\vtheta)| &\!\!\!\!\leq\!\!\!\!& \sum_{\alpha=1}^{L} |w_\alpha| |\partial_{\ell j k}f_\alpha(\vtheta) |\nonumber\\
&\!\!\!\!\leq\!\!\!\!& \sum_{\alpha=1}^{L} |w_\alpha| \| [\wt \mG_k(\vtheta),  [\wt \mG_j(\vtheta),  [\wt \mG_\ell(\vtheta), \mP_\alpha] ] ]  \|\nonumber\\
&\!\!\!\!\leq\!\!\!\!& \sum_{\alpha=1}^{L} 8 |w_\alpha|\|\wt \mG_\ell(\vtheta) \|\|\wt \mG_j(\vtheta) \|\|\wt \mG_k(\vtheta) \|  \|\mP_\alpha\|\nonumber\\
&\!\!\!\!\leq\!\!\!\!& 8 \|\vw\|_1.
\end{eqnarray}

\medskip
\noindent\textbf{Step 5: Hessian Lipschitz constant.}

Fix $\vx,\vy\in\R^{N}$ and define $\mDelta:=\vx-\vy$. For each $(\ell,j)$, the function $t\mapsto\partial_{\ell j}f(\vy+t\mDelta)$ is differentiable on $[0,1]$. Hence, by the mean value theorem, there exists $t_{\ell j}\in(0,1)$ such that
\begin{eqnarray}
\label{mean value theorem second order}
\partial_{\ell j}f(\vx) - \partial_{\ell j}f(\vy) =   \sum_{k=1}^N \partial_{\ell jk}f(\vy+t_{\ell j}\mDelta)\,\Delta_k .
\end{eqnarray}
Applying \eqref{spectral norm third derivation of f2} and the triangle inequality gives
\begin{eqnarray}
\label{Heesian L constant of second order}
\big|\big(\nabla^2f(\vx)-\nabla^2f(\vy)\big)_{\ell j}\big| &\!\!\!\!=\!\!\!\!& \big|\partial_{\ell j}f(\vx)-\partial_{\ell j}f(\vy)\big|\nonumber\\
&\!\!\!\!\le\!\!\!\!& \sum_{k=1}^N 8\|\vw\|_1\,|\Delta_k| = 8\|\vw\|_1\,\|\mDelta\|_1 \leq 8\sqrt{N}\|\vw\|_1\,\|\mDelta\|_2.
\end{eqnarray}
Since $\nabla^2f(\vx)-\nabla^2f(\vy)$ is symmetric, its spectral norm is bounded by the maximum absolute row sum. Therefore, we have
\begin{eqnarray}
\label{Heesian L constant of second order1}
\|\nabla^2f(\vx)-\nabla^2f(\vy)\| \leq \max_{1\leq \ell \leq N}\sum_{j=1}^{N}\big|\big(\nabla^2f(\vx)-\nabla^2f(\vy)\big)_{\ell j}\big| \leq   8N^{\frac{3}{2}}\|\vw\|_1\,\|\vx - \vy\|_2.
\end{eqnarray}
\end{proof}

\section{Proof of \Cref{lem:noise-model}}
\label{proof of lemma different noise properties}

\begin{proof}

\medskip
\noindent\textbf{Proof of \eqref{properties of noise1}.} The expectation of a single-shot measurement outcome follows directly from the
Born rule. Specifically, we have
\begin{eqnarray}
\label{expectation of Pauli observables pm}
\E[X_{\alpha,k}^s \mid \mDelta] &\!\!\!\!=\!\!\!\!& (+1)\cdot \trace(\vrho(\vtheta^s){\bm \Pi}_{\alpha}^+) + (-1)\cdot \trace(\vrho(\vtheta^s){\bm \Pi}_{\alpha}^-)\nonumber\\
&\!\!\!\!=\!\!\!\!&  \trace\big(\vrho(\vtheta^s)\, \mP_\alpha\big).
\end{eqnarray}
The noiseless objective value satisfies
\begin{eqnarray}
\label{expectation of real output}
f(\vtheta^s) &\!\!\!\!=\!\!\!\!& \trace(\vrho(\vtheta^s) \mH) \nonumber\\
&\!\!\!\!=\!\!\!\!& \sum_{\alpha=1}^L w_\alpha\, \trace(\vrho(\vtheta^s) \mP_\alpha).
\end{eqnarray}
Combining the above relations with the definition of the empirical estimator $\hat f^s := \sum_{\alpha=1}^L w_\alpha \hat P_\alpha^s$, we obtain
\begin{eqnarray}
\label{expectation of noisy output}
\E[\hat f^s \mid \mDelta] &\!\!\!\!=\!\!\!\!& \sum_{\alpha=1}^L w_\alpha\, \E[\hat P_\alpha^s \mid \mDelta]\nonumber\\
&\!\!\!\!=\!\!\!\!& \sum_{\alpha=1}^L w_\alpha\, \frac{1}{M}\sum_{k=1}^M \E[X_{\alpha,k}^s\mid\mDelta]\nonumber\\
&\!\!\!\!=\!\!\!\!& \sum_{\alpha=1}^L w_\alpha\, \trace\big(\vrho(\vtheta^s)\, \mP_\alpha\big)\nonumber\\
&\!\!\!\!=\!\!\!\!&f(\vtheta^s),
\end{eqnarray}
where the third equality follows from \eqref{expectation of Pauli observables pm}, and the last equality follows from \eqref{expectation of real output}.

It follows that the oracle noise term $\xi(\vtheta^s) := \hat f^s-f(\vtheta^s)$ satisfies
\begin{eqnarray}
\label{expectation of noise term appendix}
\E[\xi(\vtheta^s) \mid \mDelta] = \E[\hat f^s \mid \mDelta] - f(\vtheta^s)  = 0.
\end{eqnarray}

\medskip
\noindent\textbf{Proof of \eqref{properties of noise2}.} By the definition of the empirical estimator $\hat f^s := \sum_{\alpha=1}^L w_\alpha \hat P_\alpha^s$, the conditional variance of $\xi(\vtheta^s)$ can be expanded as
\begin{eqnarray}
\label{variance of noisy output}
\mathrm{Var}(\xi(\vtheta^s) \mid \mDelta) &\!\!\!\!=\!\!\!\!& \mathrm{Var}(\hat f^s - f(\vtheta^s) \mid \mDelta)
= \mathrm{Var}(\hat f^s \mid \mDelta)\nonumber\\
&\!\!\!\!=\!\!\!\!& \sum_{\alpha=1}^L w_\alpha^2\, \mathrm{Var}(\hat P_\alpha^s\mid\mDelta)
+ \sum_{\alpha\ne\beta} w_\alpha w_\beta\,
\mathrm{Cov}(\hat P_\alpha^s, \hat P_\beta^s \mid \mDelta)\nonumber\\
&\!\!\!\!=\!\!\!\!& \sum_{\alpha=1}^L w_\alpha^2\, \mathrm{Var}(\hat P_\alpha^s\mid\mDelta) = \sum_{\alpha=1}^L \frac{w_\alpha^2}{M^2}\,\mathrm{Var}\Big(\sum_{k=1}^M X_{\alpha,k}^s \,\Big|\,\mDelta\Big)\nonumber\\
&\!\!\!\!=\!\!\!\!& \sum_{\alpha=1}^L \frac{w_\alpha^2}{M^2}\, \bigg( \sum_{k=1}^M \mathrm{Var}(X_{\alpha,k}^s\mid\mDelta)
+ \sum_{k\ne j} \mathrm{Cov}(X_{\alpha,k}^s, X_{\alpha,j}^s \mid \mDelta)  \bigg)\nonumber\\
&\!\!\!\!=\!\!\!\!& \sum_{\alpha=1}^L \frac{w_\alpha^2}{M^2}\,   \sum_{k=1}^M \mathrm{Var}(X_{\alpha,k}^s\mid\mDelta)\nonumber\\
&\!\!\!\!\leq\!\!\!\!& \frac{\|\vw\|_2^2}{M},
\end{eqnarray}
where the third equality follows from the conditional independence of
$\hat P_\alpha^s$ and $\hat P_\beta^s$ for $\alpha\neq\beta$, and the fifth
equality follows from the conditional independence of
$X_{\alpha,k}^s$ and $X_{\alpha,j}^s$ for $k\neq j$. Both facts are direct
consequences of the measurement protocol, under which all measurement outcomes
are mutually independent conditional on $\mDelta$ across different Pauli
observables, perturbation signs, and shot indices. The final inequality follows
from $\mathrm{Var}(X_{\alpha,k}^s \mid \mDelta)
= \E[(X_{\alpha,k}^s)^2\mid\mDelta] - \big(\E[X_{\alpha,k}^s\mid\mDelta]\big)^2
= 1 - \big(\E[X_{\alpha,k}^s\mid\mDelta]\big)^2 \le 1$.

\medskip
\noindent\textbf{Proof of \eqref{independence of noise}.} By construction, $\xi(\vtheta^{+})$ is determined solely by the measurement outcomes $\{X_{\alpha,k}^{+}\}_{\alpha,k}$, whereas $\xi(\vtheta^{-})$ is determined solely by the outcomes $\{X_{\alpha,k}^{-}\}_{\alpha,k}$. Since the measurement outcomes associated with different perturbation signs are mutually independent conditional on $\mDelta$, the two collections of random variables are independent given $\mDelta$. Therefore, any functions of these two independent collections are also independent conditional on $\mDelta$, which directly yields
\begin{eqnarray}
\label{independence of noise appendix}
\xi(\vtheta^+) \perp \xi(\vtheta^-) \mid \mDelta.
\end{eqnarray}
\end{proof}

\section{Proof of \Cref{thm:main property of noisy gradient}}
\label{Proof of theorem main property of noisy gradient}
\begin{proof}

We begin by decomposing the SPSA gradient estimator into a deterministic finite-difference term and a measurement noise term. By \eqref{noisy loss VQE} and \eqref{elements in the gradient of SPSA iteration}, we have
\begin{eqnarray}
\label{decomposition of SPSA estimator1}
\hat g_\ell(\vtheta)  &\!\!\!\!=\!\!\!\!&  \frac{f(\vtheta + c\mDelta)-f(\vtheta - c\mDelta)}{2c\Delta_\ell} + \frac{\xi(\vtheta^+) - \xi(\vtheta^-)}{2c\Delta_\ell}\nonumber\\
&\!\!\!\!:=\!\!\!\!& \frac{f(\vtheta + c\mDelta)-f(\vtheta - c\mDelta)}{2c\Delta_\ell} + \delta_{\ell}(\vtheta).
\end{eqnarray}
The finite-difference term in \eqref{decomposition of SPSA estimator1} can be analyzed using Taylor's theorem with Lagrange remainder. There exist $\zeta^+,\zeta^-\in(0,1)$, depending only on $\mDelta$, such that
\begin{eqnarray}
\label{Taylor expansion of noiseless SPSA estimator1}
f(\vtheta \pm c\mDelta) = f(\vtheta) \pm c \nabla f(\vtheta)^\top \mDelta + \frac{c^2}{2}\,\mDelta^\top \nabla^2 f(\vtheta\pm\zeta^\pm c\mDelta)\,\mDelta.
\end{eqnarray}
Subtracting the two expansions gives
\begin{eqnarray}
\label{Taylor expansion of noiseless SPSA estimator2}
&\!\!\!\!\!\!\!\!& f(\vtheta + c\mDelta)-f(\vtheta - c\mDelta) = 2c \nabla f(\vtheta)^\top \mDelta  +  \frac{c^2}{2}\,\mDelta^\top\big[\nabla^2 f(\vtheta+\zeta^+c\mDelta)-\nabla^2 f(\vtheta-\zeta^-c\mDelta)\big]\mDelta\nonumber\\
&\!\!\!\!\Longrightarrow\!\!\!\!& \frac{f(\vtheta + c\mDelta)-f(\vtheta - c\mDelta)}{2c\Delta_\ell} =  \frac{\nabla f(\vtheta)^\top \mDelta}{\Delta_\ell}  +  \rho_\ell(\vtheta),
\end{eqnarray}
where we define $\rho_\ell(\vtheta) = \frac{c\mDelta^\top\big[\nabla^2 f(\vtheta+\zeta^+c\mDelta)-\nabla^2 f(\vtheta-\zeta^-c\mDelta)\big]\mDelta}{4\Delta_\ell}$.

Next, using the identity $\Delta_\ell^{-1}=\Delta_\ell$ for Rademacher perturbations, we obtain
\begin{eqnarray}
\label{Taylor expansion of noiseless SPSA estimator3}
&\!\!\!\!\!\!\!\!& \frac{\nabla f(\vtheta)^\top \mDelta}{\Delta_\ell}  = \sum_{j=1}^N \partial_j f(\vtheta)\frac{\Delta_j}{\Delta_\ell}\nonumber\\
&\!\!\!\!\Longrightarrow\!\!\!\!& \frac{\nabla f(\vtheta)^\top \mDelta}{\Delta_\ell}  = \sum_{j=1}^N \partial_j f(\vtheta)\Delta_j\Delta_\ell\nonumber\\
&\!\!\!\!\Longrightarrow\!\!\!\!& \frac{\nabla f(\vtheta)^\top \mDelta}{\Delta_\ell}  =  \partial_\ell f(\vtheta) + \sum_{j \neq \ell }  \partial_j f(\vtheta)\Delta_j\Delta_\ell.
\end{eqnarray}
Combining \eqref{decomposition of SPSA estimator1}, \eqref{Taylor expansion of noiseless SPSA estimator2}, and \eqref{Taylor expansion of noiseless SPSA estimator3}
gives
\begin{eqnarray}
\label{decomposition of SPSA estimator2}
\hat g_\ell(\vtheta) = \partial_\ell f(\vtheta) + \sum_{j \neq \ell }  \partial_j f(\vtheta)\Delta_j\Delta_\ell + \rho_\ell(\vtheta) + \delta_{\ell}(\vtheta),
\end{eqnarray}
where
\begin{eqnarray}
\label{definition of delta}
\delta_{\ell}(\vtheta) = \frac{\xi(\vtheta^+) - \xi(\vtheta^-)}{2c\Delta_\ell},
\end{eqnarray}
and
\begin{eqnarray}
\label{definition of rho}
\rho_\ell(\vtheta) = \frac{c\mDelta^\top\big[\nabla^2 f(\vtheta+\zeta^+c\mDelta)-\nabla^2 f(\vtheta-\zeta^-c\mDelta)\big]\mDelta}{4\Delta_\ell}.
\end{eqnarray}

\medskip
\textbf{Proof of \eqref{bias of noisy gradient}.} Since the entries of $\mDelta$ are independent Rademacher random variables, we have
$\E[\Delta_j\Delta_\ell]=\E[\Delta_j]\E[\Delta_\ell]=0$ for all $j\neq\ell$. Moreover, by the tower property of conditional expectation and the measurement noise model in Lemma~\ref{lem:noise-model}, we have
\begin{eqnarray}
\label{expectation of delta1}
\E[\delta_{\ell}(\vtheta)] = \E\big[\E[\delta_{\ell}(\vtheta) \mid \mDelta] \big] = \E\big[\tfrac{1}{2c\Delta_\ell} \E[\xi(\vtheta^+) - \xi(\vtheta^-) \mid \mDelta ]  \big] = 0,
\end{eqnarray}
where the second equality follows from the fact that $\Delta_\ell$ is measurable with respect to $\sigma(\mDelta)$, and the last equality follows from
\eqref{properties of noise1}.

Furthermore, according to Proposition~\ref{Proposition: Hessian Property}, we obtain
\begin{eqnarray}
\label{upper bound of rho}
|\rho_\ell(\vtheta)| &\!\!\!\!\leq \!\!\!\!& \frac{c}{4} ||\mDelta ||_2^2\|\nabla^2 f(\vtheta+\zeta^+c\mDelta)-\nabla^2 f(\vtheta-\zeta^-c\mDelta) \|\nonumber\\
&\!\!\!\!\leq \!\!\!\!&   2c\,\|\vw\|_1\,N^{\frac{5}{2}}\|(\zeta^+ + \zeta^-   )c\mDelta\|_2\nonumber\\
&\!\!\!\!\leq \!\!\!\!&  4c^2 N^3\|\vw\|_1.
\end{eqnarray}

Taking expectations on both sides of \eqref{decomposition of SPSA estimator2} and using the above identities yields
\begin{eqnarray}
\label{bias of noisy gradient appendix}
\big|\E[\hat g_\ell(\vtheta)] - \partial_\ell f(\vtheta)\big| = |\E[\rho_\ell(\vtheta)]| \leq \E[|\rho_\ell(\vtheta)|] \leq 4c^2 N^3\|\vw\|_1.
\end{eqnarray}

\medskip
\textbf{Proof of \eqref{variance of noisy gradient}.} To analyze the variance of the SPSA estimator, we decompose \eqref{decomposition of SPSA estimator2} into a noise-free component and a measurement noise component:
\begin{eqnarray}
\label{decomposition of SPSA estimator another}
\hat g_\ell(\vtheta) &\!\!\!\! = \!\!\!\!&  \partial_\ell f(\vtheta) + \sum_{j \neq \ell }  \partial_j f(\vtheta)\Delta_j\Delta_\ell+ \rho_\ell(\vtheta)  + \delta_{\ell}(\vtheta)\nonumber\\
&\!\!\!\!:= \!\!\!\!& \wt g_\ell(\vtheta) + \delta_{\ell}(\vtheta),
\end{eqnarray}
where $\wt g_\ell(\vtheta)$ is measurable with respect to $\sigma(\mDelta)$. Since
$\E[\delta_{\ell}(\vtheta)\mid\mDelta]=0$, it follows from the law of total expectation that
\begin{eqnarray}
\label{expectation of cross}
\E[\wt g_\ell(\vtheta)\delta_{\ell}(\vtheta) ] = \E[\wt g_\ell(\vtheta)  \E[\delta_{\ell}(\vtheta) \mid \mDelta  ]] =0.
\end{eqnarray}
Consequently, the covariance between $\wt g_\ell(\vtheta)$ and $\delta_{\ell}(\vtheta)$ vanishes, yielding
\begin{eqnarray}
\label{covariance of g ll}
\Var(\hat g_\ell(\vtheta))= \Var(\wt g_\ell(\vtheta)) + \Var(\delta_{\ell}(\vtheta)).
\end{eqnarray}

First, we derive an upper bound for $\Var(\delta_{\ell}(\vtheta))$. By the conditional independence of $\xi(\vtheta^{+})$ and $\xi(\vtheta^{-})$ given $\mDelta$,
\begin{eqnarray}
\label{expansion of conditional noisy variance}
\Var(\delta_{\ell}(\vtheta) \mid \mDelta) = \frac{1}{4c^2} \Var(\xi(\vtheta^{+}) - \xi(\vtheta^{-}) \mid \mDelta) = \frac{1}{4c^2} \Big( \Var(\xi(\vtheta^{+}) \mid \mDelta)
+ \Var(\xi(\vtheta^{-}) \mid \mDelta) \Big).
\end{eqnarray}
Applying Lemma~\ref{lem:noise-model} yields
\begin{eqnarray}
\label{upper bound for conditional noisy variance1}
\Var(\xi(\vtheta^{\pm}) \mid \mDelta) \leq \frac{\|\vw\|_2^2}{M},
\end{eqnarray}
and therefore
\begin{eqnarray}
\label{upper bound for conditional noisy variance2}
\Var(\delta_{\ell}(\vtheta) \mid \mDelta) \leq \frac{\|\vw\|_2^2}{2c^2M}.
\end{eqnarray}
Taking expectation and using the law of total variance, we have
\begin{eqnarray}
\label{upper bound for conditional noisy variance3}
\Var(\delta_{\ell}(\vtheta)) = \E[\Var(\delta_{\ell}(\vtheta) \mid \mDelta)] + \Var(\E[\delta_{\ell}(\vtheta) \mid \mDelta]).
\end{eqnarray}
Since $\E[\delta_{\ell}(\vtheta) \mid \mDelta]=0$, the second term vanishes, and hence
\begin{eqnarray}
\label{upper bound for conditional noisy variance4}
\Var(\delta_{\ell}(\vtheta)) = \E[\Var(\delta_{\ell}(\vtheta) \mid \mDelta)] \leq \frac{\|\vw\|_2^2}{2c^2M}.
\end{eqnarray}

Next, we derive an upper bound for $\Var(\wt g_\ell(\vtheta))$. Recall that $\wt g_\ell(\vtheta) = \partial_\ell f(\vtheta) + \sum_{j \neq \ell }  \partial_j f(\vtheta)\Delta_j\Delta_\ell + \rho_\ell(\vtheta)   := g_\ell^0(\vtheta) + \rho_\ell(\vtheta)$. Expanding the variance gives
\begin{eqnarray}
\label{expansion of non free parameter}
\Var(\wt g_\ell(\vtheta)) = \Var(g_\ell^0(\vtheta)) + \Var(\rho_\ell(\vtheta)) + 2 \Cov(g_\ell^0(\vtheta), \rho_\ell(\vtheta) ).
\end{eqnarray}
By \eqref{upper bound of rho}, we have
\begin{eqnarray}
\label{expansion of non free parameter1}
\Var(\rho_\ell(\vtheta)) \leq \E[ (\rho_\ell(\vtheta))^2 ] \leq 16 c^4 N^6 \|\vw\|_1^2.
\end{eqnarray}
We next analyze the variance of $g_\ell^0(\vtheta)$. Since $\E[g_\ell^0(\vtheta)] = \E[\partial_\ell f(\vtheta)] + \E[\sum_{j \neq \ell }  \partial_j f(\vtheta)\Delta_j] \E[\Delta_\ell] = \partial_\ell f(\vtheta)$ using $\E[\Delta_\ell] = 0$, it follows that
\begin{eqnarray}
\label{expansion of non free parameter variance}
\Var(g_\ell^0(\vtheta)) &\!\!\!\! =\!\!\!\!& \E[(g_\ell^0(\vtheta) -  \E[g_\ell^0(\vtheta)] )^2 ] = \E\bigg[\bigg(\sum_{j \neq \ell }  \partial_j f(\vtheta)\Delta_j\Delta_\ell\bigg)^2 \bigg]\nonumber\\
&\!\!\!\! = \!\!\!\!&\E\bigg[\bigg(\sum_{j \neq \ell }  \partial_j f(\vtheta)\Delta_j \bigg)^2 \bigg]\nonumber\\
&\!\!\!\! = \!\!\!\!& \sum_{j \neq \ell } (\partial_j f(\vtheta))^2 \E[\Delta_j^2 ] + \sum_{j\neq k \atop j,k\neq \ell} \partial_j f(\vtheta) \partial_k f(\vtheta) \E[\Delta_j\Delta_k]\nonumber\\
&\!\!\!\! = \!\!\!\!& \sum_{j \neq \ell } (\partial_j f(\vtheta))^2\nonumber\\
&\!\!\!\! = \!\!\!\!& \|\nabla f(\vtheta)\|_2^2 - (\partial_{\ell} f(\vtheta))^2.
\end{eqnarray}
Applying the Cauchy--Schwarz inequality together with \eqref{expansion of non free parameter1} and \eqref{expansion of non free parameter variance}, we obtain
\begin{eqnarray}
\label{upper bound of covariance term}
|\Cov(g_\ell^0(\vtheta), \rho_\ell(\vtheta) )| \leq \sqrt{\Var(\rho_\ell(\vtheta)) \Var(g_\ell^0(\vtheta)) }\leq 4c^2 N^{3} \| \vw\|_1 \sqrt{\|\nabla f(\vtheta)\|_2^2 - (\partial_{\ell} f(\vtheta))^2}.
\end{eqnarray}
Substituting \eqref{expansion of non free parameter1}, \eqref{expansion of non free parameter variance}, and \eqref{upper bound of covariance term} into \eqref{expansion of non free parameter} yields
\begin{eqnarray}
\label{expansion of non free parameter conclusion appendix}
\Var(\wt g_\ell(\vtheta)) \leq \|\nabla f(\vtheta)\|_2^2 - (\partial_{\ell} f(\vtheta))^2 + 16  c^4 N^6 \|\vw\|_1^2 +  8c^2 N^{3} \| \vw\|_1 \sqrt{\|\nabla f(\vtheta)\|_2^2 - (\partial_{\ell} f(\vtheta))^2}.
\end{eqnarray}
Finally, combining \eqref{covariance of g ll} with \eqref{upper bound for conditional noisy variance4} yields
\begin{eqnarray}
\label{covariance of g ll conclusion appendix}
\Var(\hat g_\ell(\vtheta)) \leq \|\nabla f(\vtheta)\|_2^2 - (\partial_{\ell} f(\vtheta))^2 + \frac{\|\vw\|_2^2}{2c^2M} + 16  c^4 N^6 \|\vw\|_1^2 +  8c^2 N^{3} \| \vw\|_1 \sqrt{\|\nabla f(\vtheta)\|_2^2 - (\partial_{\ell} f(\vtheta))^2}.
\end{eqnarray}
By using $\Var(\hat g_\ell(\vtheta) - \partial_\ell f(\vtheta)) = \Var(\hat g_\ell(\vtheta))$, this completes the proof.

\end{proof}

\section{Proof of Corollary~\ref{cor:sample complexity barren plateau}}
\label{Proof of SNR lower bound requirement}

\begin{proof}
In the following, we use \eqref{definition of population SNR} to characterize the measurement budget required for reliable SPSA optimization. By \Cref{thm:main property of noisy gradient}, we have
\begin{eqnarray}
\label{population SNR lower bound}
{\mathrm{SNR}}_\ell \geq \frac{ \E_{\Theta} [(\partial_\ell f(\vtheta))^2] }{L_d}.
\end{eqnarray}
where $L_d = \E_{\Theta} [\|\nabla f(\vtheta)\|_2^2 - (\partial_{\ell} f(\vtheta))^2] + \frac{\|\vw\|_2^2}{2c^2M} + 16  c^4 N^6 \|\vw\|_1^2 +  8c^2 N^{3} \| \vw\|_1 \E_{\Theta}\Big[\sqrt{\|\nabla f(\vtheta)\|_2^2 - (\partial_{\ell} f(\vtheta))^2}\,\Big]$.

Let $b:=\E_{\Theta}[(\partial_\ell f(\vtheta))^2]$. Since $\E_{\Theta}[\|\nabla f(\vtheta)\|_2^2] = \sum_{\ell=1}^{N}\E_{\Theta}[(\partial_\ell f(\vtheta))^2]$, we adopt the scaling $\E_{\Theta}[\|\nabla f(\vtheta)\|_2^2] = N b$, so that
\begin{eqnarray}
\E_{\Theta}[\|\nabla f(\vtheta)\|_2^2 - (\partial_\ell f(\vtheta))^2] = (N-1)b,
\end{eqnarray}
and, by Jensen's inequality,
\begin{eqnarray}
\E_{\Theta}\Big[\sqrt{\|\nabla f(\vtheta)\|_2^2 - (\partial_{\ell} f(\vtheta))^2}\,\Big] \;\le\; \sqrt{(N-1)b}.
\end{eqnarray}
To characterize the scaling behavior, we choose the perturbation parameter $c$ such that the finite-difference bias is commensurate with the intrinsic gradient scale, namely
\begin{eqnarray}
\label{c scaling}
c^2 \;=\; \gamma\sqrt{b} \;=\; \Theta\Big(\sqrt{\E_{\Theta}[(\partial_\ell f(\vtheta))^2]}\Big)
\end{eqnarray}
for a constant $\gamma>0$ independent of $n$. Under \eqref{c scaling}, the three bias-related terms in the denominator of \eqref{population SNR lower bound} all scale linearly in $b$:
\begin{eqnarray}
&\!\!\!\!\!\!\!\!&(N-1)b + 16c^4N^6\|\vw\|_1^2 + 8c^2N^3\|\vw\|_1\sqrt{(N-1)b}\nonumber\\
&\!\!\!\!=\!\!\!\!& b\Big[(N-1) + 16\gamma^2N^6\|\vw\|_1^2 + 8\gamma N^3\sqrt{N-1}\,\|\vw\|_1\Big]\nonumber\\
&\!\!\!\!=\!\!\!\!& b\,A(N,\vw,\gamma),
\end{eqnarray}
where we have introduced the constant
\begin{eqnarray}
\label{definition of A}
A(N,\vw,\gamma) \;:=\; \Big(\sqrt{N-1} + 4\gamma N^3\|\vw\|_1\Big)^{2}.
\end{eqnarray}
The SNR lower bound therefore becomes
\begin{eqnarray}
\label{population SNR lower bound simplified}
{\mathrm{SNR}}_\ell \;\geq\; \frac{ b }{ A(N,\vw,\gamma)\,b + \dfrac{\|\vw\|_2^2}{2\gamma\sqrt{b}\,M} }.
\end{eqnarray}
To ensure $\mathrm{SNR}_\ell \ge \varepsilon_1$ for a prescribed target $\varepsilon_1>0$, it suffices to impose $\varepsilon_1$ as a lower bound on the right-hand side of \eqref{population SNR lower bound simplified}, which gives
\begin{eqnarray}
\label{snr requirement expanded}
b\Big[1-\varepsilon_1 A(N,\vw,\gamma)\Big] \;\ge\; \frac{\varepsilon_1\|\vw\|_2^2}{2\gamma\sqrt{b}\,M}.
\end{eqnarray}
Provided that
\begin{eqnarray}
\label{feasibility condition}
\varepsilon_1 \;<\; \frac{1}{A(N,\vw,\gamma)} \;=\; \frac{1}{\big(\sqrt{N-1}+4\gamma N^3\|\vw\|_1\big)^2},
\end{eqnarray}
so that the bracketed factor on the left-hand side of \eqref{snr requirement expanded} is strictly positive, rearranging \eqref{snr requirement expanded} gives
\begin{eqnarray}
\label{sample complexity explicit}
M \;\ge\; \frac{\varepsilon_1\|\vw\|_2^2}{2\gamma\, b^{3/2}\Big[1-\varepsilon_1 A(N,\vw,\gamma)\Big]}.
\end{eqnarray}
Fixing $N$, $\vw$, $\gamma$, and $\varepsilon_1$ subject to \eqref{feasibility condition}, we therefore obtain
\begin{eqnarray}
\label{sample complexity lower bound c scaled}
M \;=\; \Omega\bigg(\frac{\varepsilon_1\|\vw\|_2^2}{\big(\E_{\Theta}[(\partial_\ell f(\vtheta))^2]\big)^{3/2}}\bigg).
\end{eqnarray}
\end{proof}

\section{Proof of \Cref{thm: upper bound of the gradient energy}}
\label{Proof of upper bound for gradient}

\begin{proof}
In this proof, we set
\begin{eqnarray}
\label{constants setting}
\mu_t=\mu_0T^{-1/2},\qquad
c_t=c_0T^{-1/8},\qquad
M_t=M_0T^{1/4},
\end{eqnarray}
where
\begin{eqnarray}
\label{constants setting requirements}
\mu_0\le \frac{1}{4(3N^2+N)\|\vw\|_1},\qquad
c_0>0,\qquad
M_0>0.
\end{eqnarray}
Since these schedules are constant with respect to the iteration index $t$, we have
\begin{eqnarray}
\label{summation of step sizes}
\sum_{t=0}^{T-1}\mu_t = \mu_0 T^{\frac{1}{2}}.
\end{eqnarray}

Applying \Cref{lemma:convergence of T time} yields
\begin{eqnarray}
\label{expansion of iteration2 another4 upper bound energy1}
\min_{0\leq t \leq T-1}\E[\|\nabla f(\vtheta_t)\|_2^2] \leq \frac{4 (f(\vtheta_0) - f^\star)}{\sum_{t=0}^{T-1}\mu_t} + \frac{4N^2\|\vw\|_1\|\vw\|_2^2\sum_{t=0}^{T-1}\frac{\mu_t^2}{c_t^2 M_t}}{\sum_{t=0}^{T-1}\mu_t}  + \frac{64N^{7}\|\vw\|_1^2\sum_{t=0}^{T-1}c_t^4\mu_t}{\sum_{t=0}^{T-1}\mu_t}.
\end{eqnarray}
By \Cref{lem:fbounded}, we have
\begin{eqnarray}
\label{difference upper bound 1}
f(\vtheta_0)-f^\star \le 2\|\vw\|_1.
\end{eqnarray}
Moreover, we can obtain
\begin{eqnarray}
\label{summation transform1}
\sum_{t=0}^{T-1}\frac{\mu_t^2}{c_t^2 M_t} = \frac{\mu_0^2}{c_0^2 M_0},
\end{eqnarray}
and
\begin{eqnarray}
\label{summation transform2}
\sum_{t=0}^{T-1}c_t^4\mu_t = c_0 \mu_0.
\end{eqnarray}
Substituting the above identities together with \eqref{summation of step sizes} into \eqref{expansion of iteration2 another4 upper bound energy1}
gives
\begin{eqnarray}
\label{expansion of iteration2 another4 upper bound energy2}
\min_{0\leq t \leq T-1}\E[\|\nabla f(\vtheta_t)\|_2^2] \leq \bigg( \frac{8 \|\vw\|_1}{\mu_0} + \frac{4N^2\|\vw\|_1\|\vw\|_2^2 \mu_0}{c_0^2 M_0} + 64 N^{7}\|\vw\|_1^2 c_0  \bigg)T^{-\frac{1}{2}}.
\end{eqnarray}
To guarantee $\min_{0\leq t \leq T-1}\E[\|\nabla f(\vtheta_t)\|_2^2] \leq O(\epsilon_2 \E_{\Theta}[\|\nabla f(\vtheta)\|_2^2])$, it is sufficient to choose the number of iterations satisfying
\begin{eqnarray}
\label{the necessary condition for the number of iterations}
T = \Omega\bigg(\frac{\kappa^2,}{(\epsilon_2)^2 (\E_{\Theta}[\|\nabla f(\vtheta)\|_2^2])^2 } \bigg),
\end{eqnarray}
where $\kappa = \frac{8 \|\vw\|_1}{\mu_0} + \frac{4N^2\|\vw\|_1\|\vw\|_2^2 \mu_0}{c_0^2 M_0} + 64 N^{7}\|\vw\|_1^2 c_0$. Consequently, the total   measurement budget required by the SPSA estimator satisfies
\begin{eqnarray}
\label{the necessary condition for the number of measurements shots for SPSA}
N_{\textup{SPSA}} = 2 L M_t T = 2L M_0 T^{\frac54} =   \Omega\bigg(\frac{L M_0\kappa^{\frac{5}{2}}}{ (\epsilon_2)^{\frac{5}{2}} (\E_{\Theta}[\|\nabla f(\vtheta)\|_2^2])^{\frac{5}{2}}} \bigg) .
\end{eqnarray}

\end{proof}

\section{Auxiliary Materials}

\begin{lemma}
\label{lem:fbounded}
Consider the VQE objective $f(\vtheta) = \langle \vphi_0| \mU^\dagger(\vtheta)\mH\mU(\vtheta) |\vphi_0\rangle$ with Hamiltonian decomposition $\mH=\sum_{\alpha=1}^{L}w_\alpha\mP_\alpha$, and denote  $\vw = \begin{bmatrix} w_1 & \cdots & w_L \end{bmatrix}^\top\in\R^{L}$. Then, for any parameter vector $\vtheta$,
\begin{eqnarray}
\label{bound of f}
-\|\vw\|_1 \le f(\vtheta) \le \|\vw\|_1.
\end{eqnarray}
As a consequence, the optimal value $f^\star :=\inf_{\vtheta} f(\vtheta)$ satisfies $f^\star\geq -\|\vw\|_1$. Moreover, the optimality gap is uniformly bounded as
\begin{eqnarray}
\label{bound of difference f and f star}
f(\vtheta)-f^\star \le 2\|\vw\|_1 .
\end{eqnarray}
\end{lemma}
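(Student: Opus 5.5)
The plan is to bound each Pauli expectation value individually by its operator norm and then combine the bounds through the Hamiltonian decomposition. As in the proof of Proposition~\ref{Proposition: Hessian Property}, we write $f(\vtheta)=\sum_{\alpha=1}^{L}w_\alpha f_\alpha(\vtheta)$ with $f_\alpha(\vtheta)=\langle\vphi_0|\mU^\dagger(\vtheta)\mP_\alpha\mU(\vtheta)|\vphi_0\rangle$.

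Set $|\psi(\vtheta)\rangle:=\mU(\vtheta)|\vphi_0\rangle$. Since $\mU(\vtheta)$ is a product of unitaries, it is unitary, so $|\psi(\vtheta)\rangle$ is a unit vector whenever $|\vphi_0\rangle$ is normalized. By the Cauchy--Schwarz inequality and the definition of the spectral norm,
\[
|f_\alpha(\vtheta)|=|\langle\psi(\vtheta)|\mP_\alpha|\psi(\vtheta)\rangle|\le\|\mP_\alpha\|\,\||\psi(\vtheta)\rangle\|_2^2=1.
\]
Here we use the standing assumption $\|\mP_\alpha\|=1$.

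The triangle inequality then gives
\[
|f(\vtheta)|\le\sum_{\alpha=1}^{L}|w_\alpha|\,|f_\alpha(\vtheta)|\le\|\vw\|_1,
\]
which is exactly \eqref{bound of f}. Because this lower bound holds for every $\vtheta$, taking the infimum yields $f^\star\ge-\|\vw\|_1$. Combining this with the upper bound $f(\vtheta)\le\|\vw\|_1$ gives $f(\vtheta)-f^\star\le 2\|\vw\|_1$.

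There is no real obstacle in this argument. The only point requiring care is the implicit normalization $\langle\vphi_0|\vphi_0\rangle=1$ of the reference state, which is needed for $|\psi(\vtheta)\rangle$ to be a unit vector, and which I would state explicitly. One could equally use the Hermitian spectral bound $|\langle\psi|\mH|\psi\rangle|\le\|\mH\|\le\|\vw\|_1$ directly.
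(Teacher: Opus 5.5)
Your proof is correct and is essentially the paper's argument. The paper applies the triangle inequality at the operator level to get $\|\mH\|\le\|\vw\|_1$ and then bounds the Rayleigh quotient at $\mU(\vtheta)|\vphi_0\rangle$ by the extreme eigenvalues of $\mH$; you bound each $|f_\alpha(\vtheta)|\le 1$ and apply the triangle inequality to the scalar sum, which is the same idea in the opposite order (you note the paper's route yourself), and your explicit normalization assumption on $|\vphi_0\rangle$ is left implicit in the paper.
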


\begin{proof}
By the triangle inequality of the spectral norm, we have
\begin{eqnarray}
\label{upper bound of Hal H}
\|\mH\| \leq
\sum_{\alpha=1}^{L}|w_\alpha|\|\mP_\alpha\| = \|\vw\|_1,
\end{eqnarray}
where we use the fact that $\|\mP_\alpha\|=1$ for Pauli operators. Since $\mH$ is Hermitian, all eigenvalues of $\mH$ lie in the interval $[-\|\mH\|,\|\mH\|]$, which is contained in $[-\|\vw\|_1,\|\vw\|_1]$.

Moreover, $f(\vtheta)$ is the Rayleigh quotient of $\mH$ evaluated at the normalized vector $\mU(\vtheta)|\vphi_0\rangle$. Therefore, it is bounded by the extreme eigenvalues of $\mH$, yielding
\begin{eqnarray}
\label{bound of f proof}
-\|\vw\|_1\leq f(\vtheta)\leq \|\vw\|_1 .
\end{eqnarray}
Taking the infimum over $\vtheta$ gives
\begin{eqnarray}
\label{lower bound of f proof}
f^\star=\inf_{\vtheta}f(\vtheta)\geq -\|\vw\|_1 .
\end{eqnarray}
Combining this lower bound with the upper bound of $f(\vtheta)$ completes the proof of
\begin{eqnarray}
\label{upper bound of difference f and f star proof}
f(\vtheta)-f^\star\leq 2\|\vw\|_1 .
\end{eqnarray}
\end{proof}

\begin{lemma}
\label{lem:smooth}
Under the setting of Proposition~\ref{Proposition: Hessian Property}, the gradient of the VQE objective function $f$ is globally Lipschitz continuous. Specifically, for any $\vx,\vy\in\R^{N}$,
\begin{eqnarray}
\label{Lipschitz constant for function f}
\|\nabla f(\vx)-\nabla f(\vy)\|_2 \leq 4N\|\vw\|_1\|\vx-\vy\|_2 .
\end{eqnarray}
\end{lemma}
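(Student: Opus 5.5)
The plan is to bound the Hessian entrywise and then convert to a spectral-norm bound, in the same way the proof of Proposition~\ref{Proposition: Hessian Property} handled the third derivatives. By the mean value theorem along the segment from $\vy$ to $\vx$, global Lipschitz continuity of $\nabla f$ with constant $L_g$ holds if $\sup_{\vtheta}\|\nabla^2 f(\vtheta)\|\le L_g$. The gradient difference can be written as
\[
\nabla f(\vx)-\nabla f(\vy)=\int_0^1 \nabla^2 f(\vy+s(\vx-\vy))(\vx-\vy)\,ds,
\]
so it suffices to show $\|\nabla^2 f(\vtheta)\|\le 4N\|\vw\|_1$ for every $\vtheta$.

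For the entrywise bound, we use the second-order formula \eqref{second derivation of f1} from Step~3 of the proof of Proposition~\ref{Proposition: Hessian Property}:
\[
\partial_{\ell j} f_\alpha(\vtheta)=-\langle\vphi_0|\mU^\dagger(\vtheta)\,[\wt\mG_j(\vtheta),[\wt\mG_\ell(\vtheta),\mP_\alpha]]\,\mU(\vtheta)|\vphi_0\rangle ,
\]
which that proof already extended to all index pairs via Clairaut's theorem. Since $\mU(\vtheta)|\vphi_0\rangle$ is a unit vector, the modulus of this quantity is at most the spectral norm of the nested commutator. Using $\|[\mA,\mB]\|\le 2\|\mA\|\|\mB\|$ twice, together with \eqref{spectral norm of wt G} ($\|\wt\mG_\ell\|\le 1$) and $\|\mP_\alpha\|=1$, gives $|\partial_{\ell j}f_\alpha|\le 4$. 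Summing over $\alpha$ with weights $|w_\alpha|$ then yields $|\partial_{\ell j} f(\vtheta)|\le 4\|\vw\|_1$.

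To pass to the spectral norm, note that $\nabla^2 f(\vtheta)$ is symmetric, so its spectral norm is at most the maximum absolute row sum (equivalently, $\|\cdot\|\le\sqrt{\|\cdot\|_1\|\cdot\|_\infty}$). Each row has $N$ entries of size at most $4\|\vw\|_1$, hence $\|\nabla^2 f(\vtheta)\|\le 4N\|\vw\|_1$, and substituting into the integral representation gives \eqref{Lipschitz constant for function f}.

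There is no real obstacle here. The one point that needs care is that \eqref{second derivation of f1} was derived for $\ell\ge j$ and then symmetrized through Clairaut's theorem, and $f$ is smooth, so the formula, and hence the entrywise bound, holds for all $(\ell,j)$.
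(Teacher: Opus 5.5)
Your proposal is correct and follows the paper's proof essentially step for step. The ingredients coincide: the entrywise bound $|\partial_{\ell j} f(\vtheta)|\le 4\|\vw\|_1$ from the nested-commutator formula, then the max-row-sum bound on the symmetric Hessian, then the integral representation of $\nabla f(\vx)-\nabla f(\vy)$. One small precision: for $\ell<j$, \eqref{second derivation of f1} is not literally valid as written and holds only with the two generators swapped inside the nested commutator, but since the bound $4$ is symmetric in the two generators, your entrywise estimate and conclusion are unaffected.
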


\begin{proof}
By \eqref{second derivation of f1}, for $\ell\geq j$, the second-order partial derivative of each Pauli term satisfies
\begin{eqnarray}
\label{second derivation of f1 another proof}
\partial_{\ell j} f_\alpha(\vtheta) = -\langle\vphi_0|\mU^\dagger(\vtheta) [\wt \mG_j(\vtheta),  [\wt \mG_\ell(\vtheta), \mP_\alpha] ] \mU(\vtheta)|\vphi_0\rangle.
\end{eqnarray}
Moreover, since $\|\wt \mG_\ell(\vtheta)\|\leq 1$ and $\|\mP_\alpha\|=1$, the nested commutator satisfies $\|[\wt \mG_j(\vtheta),  [\wt \mG_\ell(\vtheta), \mP_\alpha] ]\|\leq 4$. Therefore,
\begin{eqnarray}
\label{upper bound second derivation of f1 another proof}
|\partial_{\ell j} f(\vtheta)| = \bigg|\sum_{\alpha=1}^{L}w_\alpha \partial_{\ell j} f_\alpha(\vtheta) \bigg| \leq  \sum_{\alpha=1}^{L} |w_\alpha| \|\partial_{\ell j} f_\alpha(\vtheta)\| \leq 4\|\vw\|_1.
\end{eqnarray}
By Clairaut's theorem, the Hessian $\nabla^2 f(\vtheta)$ is symmetric. Hence, its spectral norm can be bounded by the maximum absolute row sum, which gives
\begin{eqnarray}
\label{upper bound spectral norm second derivation of f1 another proof}
\|\nabla^2f(\vtheta)\| \leq \max_{\ell} \sum_{j=1}^{N}|\partial_{\ell j} f(\vtheta)| \leq 4N\|\vw\|_1.
\end{eqnarray}
Finally, applying the fundamental theorem of calculus along the line segment connecting $\vx$ and $\vy$, we obtain
\begin{eqnarray}
\label{fundamental theorem of calculus another proof}
\nabla f(\vx)-\nabla f(\vy) = \int_0^1 \nabla^2f\bigl(\vy+t(\vx-\vy)\bigr)(\vx-\vy)\,dt.
\end{eqnarray}
Taking norms on both sides yields $\|\nabla f(\vx)-\nabla f(\vy)\|_2\le\int_0^1\|\nabla^2f\bigl(\vy+t(\vx-\vy)\bigr)\|\,dt\,\|\vx-\vy\|_2\le 4N\|\vw\|_1 \|\vx-\vy\|_2$.
\end{proof}

\begin{lemma}
\label{lem:onestep of SPSA}
Under the noise model in \Cref{lem:noise-model}, suppose that $\mu_t \leq \frac{1}{4(3N^2+N)\|\vw\|_1}$. Let $\{\calF_t\}_{t\ge0}$ denote the natural filtration generated by the iterates
$\{\vtheta_t\}_{t\ge0}$, i.e., $\calF_t:=\sigma(\vtheta_0,\dots,\vtheta_t)$.  Since the perturbation vectors and measurement noises are independently generated at each iteration, the conditional expectation satisfies $\E[\cdot\mid\calF_t]=\E[\cdot\mid\vtheta_t]$. Then the following one-step descent inequality holds:
\begin{eqnarray}
\label{upper bound noiseless gradient in expectation}
\E[f(\vtheta_{t+1})\mid \calF_t  ] \leq   f(\vtheta_t) - \frac{\mu_t}{4}\|\nabla f(\vtheta_t)\|_2^2 + \frac{N^2\|\vw\|_1\|\vw\|_2^2}{c_t^2 M} \mu_t^2 + 16c_t^4 N^{7}\|\vw\|_1^2\mu_t.
\end{eqnarray}
The conditional expectation is taken with respect to the randomness of the SPSA perturbation $\mDelta_t=[\Delta_{1,t},\ldots,\Delta_{N,t}]^\top$ and the measurement noise $\xi(\vtheta_t)$ at iteration $t$.
\end{lemma}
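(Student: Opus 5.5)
The plan is the standard descent-lemma argument for $L$-smooth functions, combined with the decomposition of $\hat\vg(\vtheta_t)$ from the proof of \Cref{thm:main property of noisy gradient}. By \Cref{lem:smooth}, $\nabla f$ is $L_f$-Lipschitz with $L_f=4N\|\vw\|_1$. This gives
\begin{eqnarray*}
f(\vtheta_{t+1}) \le f(\vtheta_t) - \mu_t\nabla f(\vtheta_t)^\top\hat\vg(\vtheta_t) + \frac{L_f\mu_t^2}{2}\|\hat\vg(\vtheta_t)\|_2^2 .
\end{eqnarray*}
I then take $\E[\cdot\mid\calF_t]$, where only $\mDelta_t$ and the measurement noise at step $t$ are random. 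It remains to control the conditional first moment $\E[\hat\vg\mid\calF_t]$ and the conditional second moment $\E[\|\hat\vg\|_2^2\mid\calF_t]$.

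\textbf{First moment.} In vector form \eqref{decomposition of SPSA estimator2} reads
\begin{eqnarray*}
\hat\vg = \mDelta\mDelta^\top\nabla f(\vtheta_t) + \vrho + \vdelta ,
\end{eqnarray*}
where $\vrho=[\rho_1,\ldots,\rho_N]^\top$ and $\vdelta=[\delta_1,\ldots,\delta_N]^\top$. Since $\E[\mDelta\mDelta^\top]=\mId$ and $\E[\vdelta\mid\mDelta]=0$ by \Cref{lem:noise-model}, the inner product term becomes $\|\nabla f\|_2^2+\nabla f^\top\E[\vrho]$. By \eqref{upper bound of rho}, $\|\vrho\|_2\le 4c_t^2N^{3.5}\|\vw\|_1$. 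Young's inequality then gives
\begin{eqnarray*}
|\nabla f^\top\E\vrho| \le \tfrac14\|\nabla f\|_2^2 + 16c_t^4N^7\|\vw\|_1^2 ,
\end{eqnarray*}
which is exactly the last term of \eqref{upper bound noiseless gradient in expectation}.

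\textbf{Second moment.} Using $\E[\vdelta\mid\mDelta]=0$ to kill the cross term, I bound
\begin{eqnarray*}
\E\|\hat\vg\|^2 \le 2\E\|\mDelta\mDelta^\top\nabla f\|^2 + 2\E\|\vrho\|^2 + \E\|\vdelta\|^2 .
\end{eqnarray*}
The three pieces are handled as follows.
\begin{itemize}
\item Since $\|\mDelta\|_2^2=N$, we have $\|\mDelta\mDelta^\top\nabla f\|^2 = N(\mDelta^\top\nabla f)^2$, whose expectation is $N\|\nabla f\|^2$.
\item Each coordinate satisfies $\E\delta_\ell^2\le\|\vw\|_2^2/(2c_t^2M)$, so $\E\|\vdelta\|^2\le N\|\vw\|_2^2/(2c_t^2M)$.
\item $\|\vrho\|^2\le 16c_t^4N^7\|\vw\|_1^2$.
\end{itemize}
Multiplying by $L_f\mu_t^2/2 = 2N\|\vw\|_1\mu_t^2$, the three pieces contribute respectively:
\begin{itemize}
\item $4N^2\|\vw\|_1\mu_t^2\|\nabla f\|^2$;
\item $N^2\|\vw\|_1\|\vw\|_2^2\mu_t^2/(c_t^2M)$, which matches the stated noise term;
\item a term of order $c_t^4N^8\|\vw\|_1^3\mu_t^2$.
\end{itemize}

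\textbf{Main obstacle: absorbing the extra terms with the step-size condition.} The hypothesis $\mu_t\le 1/(4(3N^2+N)\|\vw\|_1)$ handles both leftover terms.
\begin{itemize}
\item It gives $4N^2\|\vw\|_1\mu_t^2\|\nabla f\|^2 \le \mu_t\|\nabla f\|^2/3$. The total coefficient on $\|\nabla f\|^2$ is then $-\mu_t+\tfrac14\mu_t+\tfrac13\mu_t$, which is more negative than $-\mu_t/4$ as required. Because the grouping $3N^2+N$ looks tailored, I may refine this by splitting the Young weights differently, or by using the exact variance from \eqref{expansion of non free parameter variance} rather than the crude factor $2$.
\item The $\vrho$-term is $\le c_t^4N^7\|\vw\|_1^2\mu_t\cdot 32N\|\vw\|_1\mu_t$. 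Since $32N\|\vw\|_1\mu_t\le 8/(3N+1)\le 2$, it merges into the $16c_t^4N^7\|\vw\|_1^2\mu_t$ term up to a constant.
\end{itemize}
I expect the delicate part to be making the constants land exactly at $16$ rather than something like $18$ or $32$. I would first try to avoid the factor $2$ splits entirely by bounding $\E\|\vrho\|^2$ jointly with the bias via Cauchy--Schwarz. If that fails, I would instead accept slightly different absolute constants, which the application in \Cref{lemma:convergence of T time} tolerates.
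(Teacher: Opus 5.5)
\textbf{Comparison with the paper.} Your route is sound and is a genuine, if modest, variant of the paper's. Both arguments start from the descent lemma with $L_f=4N\|\vw\|_1$. Both also use the same first-moment identity: your $\nabla f^\top\E\vrho$ is exactly the paper's $\langle\nabla f,\E[\hat\vg\mid\calF_t]-\nabla f\rangle$.

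The difference is in the second moment. The paper writes $\E\|\hat\vg\|_2^2=\|\E\hat\vg\|_2^2+\sum_\ell\Var(\hat g_\ell)$ and imports the per-coordinate variance bound of \Cref{thm:main property of noisy gradient}, absorbing the square-root cross term by AM--GM. This gives the gradient coefficient $\tfrac{3N+1}{2}$. Multiplied by $L_f/2=2N\|\vw\|_1$, that coefficient is exactly where $3N^2+N$ comes from. It also lets the paper use the symmetric Young split $\tfrac12\|\nabla f\|_2^2+\tfrac12\|\E\vrho\|_2^2$, which costs only $8c_t^4N^7\|\vw\|_1^2\mu_t$ in the first-moment step.

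Your bound via $\hat\vg=\mDelta\mDelta^\top\nabla f+\vrho+\vct{\delta}$ and $\|\mDelta\mDelta^\top\nabla f\|_2^2=N(\mDelta^\top\nabla f)^2$ is shorter and self-contained. It never touches that variance bound. The paper's AM--GM step there is in fact slightly off. With $X:=\|\nabla f\|_2^2-(\partial_\ell f)^2$, the implied inequality $8c^2N^3\|\vw\|_1\sqrt{X}\le\tfrac12X+16c^4N^6\|\vw\|_1^2$ fails at $\sqrt{X}=4c^2N^3\|\vw\|_1$, so the $32$ should be $48$; the lemma's final constant still survives. The price of your route is the looser coefficient $2N$ instead of $\tfrac{3N+1}{2}$. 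As a result, the step-size condition only gives $\tfrac{N}{3N+1}\mu_t\le\tfrac13\mu_t$ rather than $\tfrac14\mu_t$.

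\textbf{The constants, as written, do not close.} There is an arithmetic slip. You have $\tfrac{L_f\mu_t^2}{2}\cdot 2\|\vrho\|_2^2\le 64c_t^4N^8\|\vw\|_1^3\mu_t^2$, so the factor is $64N\|\vw\|_1\mu_t\le\tfrac{16}{3N+1}\le 4$, not $32N\|\vw\|_1\mu_t\le 2$. Your $\tfrac14$-weighted Young step already spends the full $16c_t^4N^7\|\vw\|_1^2\mu_t$. So you end at $\bigl(16+\tfrac{16}{3N+1}\bigr)c_t^4N^7\|\vw\|_1^2\mu_t$, up to $20$, which is not the stated inequality.

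You need neither a joint Cauchy--Schwarz nor modified constants, because you already have slack: your gradient coefficient $-\tfrac{5}{12}\mu_t$ beats the required $-\tfrac14\mu_t$. Spend that slack in the Young step:
$|\nabla f^\top\E\vrho|\le\tfrac{5}{12}\|\nabla f\|_2^2+\tfrac35\|\E\vrho\|_2^2$.
\begin{itemize}
\item The gradient coefficient becomes at most $-1+\tfrac{5}{12}+\tfrac13=-\tfrac14$.
\item The $c_t^4$ term becomes at most $\bigl(\tfrac35\cdot 16+\tfrac{16}{3N+1}\bigr)c_t^4N^7\|\vw\|_1^2\mu_t\le\tfrac{68}{5}\,c_t^4N^7\|\vw\|_1^2\mu_t<16\,c_t^4N^7\|\vw\|_1^2\mu_t$.
\item The noise term $\tfrac{N^2\|\vw\|_1\|\vw\|_2^2}{c_t^2M}\mu_t^2$ already matches.
\end{itemize}
With this one change your argument proves the lemma exactly as stated.
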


\begin{proof}
By \Cref{lem:smooth}, the gradient $\nabla f$ is $4N\|\vw\|_1$-Lipschitz. Therefore, the standard descent lemma implies
\begin{eqnarray}
\label{expansion of iteration}
f(\vtheta_{t+1}) \le f(\vtheta_t) - \mu_t\langle\nabla f(\vtheta_t),\hat g(\vtheta_t)\rangle + 2N\|\vw\|_1\mu_t^2\|\hat g(\vtheta_t)\|_2^2 .
\end{eqnarray}
Taking conditional expectation with respect to $\calF_t$, we first bound the linear term. We have
\begin{eqnarray}
\label{conditional expectation on the cross term}
\E[\langle\nabla f(\vtheta_t),\hat g(\vtheta_t)\rangle \mid \calF_t] &\!\!\!\!=\!\!\!\!& \|\nabla f(\vtheta_t) \|_2^2 + \<\nabla f(\vtheta_t), \E[ \hat g(\vtheta_t) \mid \calF_t ]- \nabla f(\vtheta_t) \>\nonumber\\
&\!\!\!\!\geq \!\!\!\!& \frac{1}{2}\|\nabla f(\vtheta_t) \|_2^2 - \frac{1}{2}\| \E[ \hat g(\vtheta_t) \mid \calF_t ] - \nabla f(\vtheta_t)\|_2^2,
\end{eqnarray}
where the inequality follows from Young's inequality.
It remains to bound $\E[\|\hat g(\vtheta_t)\|_2^2\mid\calF_t]$. Using the variance decomposition, we have
\begin{eqnarray}
\label{expansion of squared term}
\E[\|\hat g(\vtheta_t)\|_2^2 \mid \calF_t ] = \big\|\E[\hat g(\vtheta_t)\mid\calF_t]\big\|_2^2 + \sum_{\ell=1}^N \Var(\hat g_{\ell}(\vtheta_t)\mid\calF_t).
\end{eqnarray}
We bound the two terms on the right-hand side separately. By the bias estimate \eqref{bias of noisy gradient appendix},
\begin{eqnarray}
\label{upper bound of the first term in squared term bound}
\| \E[ \hat g(\vtheta_t) \mid \calF_t ] - \nabla f(\vtheta_t)\|_2^2 \leq 16c_t^4 N^{7}\|\vw\|_1^2.
\end{eqnarray}
Consequently, we can obtain
\begin{eqnarray}
\label{upper bound of the first term in squared term}
\big\|\E[\hat g(\vtheta_t)\mid\calF_t]\big\|_2^2 &\!\!\!\!\leq \!\!\!\!& 2\|\nabla f(\vtheta_t) \|_2^2 + 2\| \E[ \hat g(\vtheta_t) \mid \calF_t ] - \nabla f(\vtheta_t)\|_2^2\nonumber\\
&\!\!\!\!\leq \!\!\!\!& 2\|\nabla f(\vtheta_t) \|_2^2 + 32c_t^4 N^{7}\|\vw\|_1^2.
\end{eqnarray}
Moreover, applying \eqref{covariance of g ll conclusion appendix} gives
\begin{eqnarray}
\label{covariance of g ll conclusion appendix another}
&\!\!\!\!\!\!\!\!&\Var(\hat g_\ell(\vtheta_t)\mid\calF_t)\nonumber\\
&\!\!\!\!\leq \!\!\!\!& \|\nabla f(\vtheta_t)\|_2^2 - (\partial_{\ell} f(\vtheta_t))^2 + \frac{\|\vw\|_2^2}{2c_t^2M} + 16  c_t^4 N^6 \|\vw\|_1^2 +  8c_t^2 N^{3} \| \vw\|_1 \sqrt{\|\nabla f(\vtheta_t)\|_2^2 - (\partial_{\ell} f(\vtheta_t))^2}\nonumber\\
&\!\!\!\!\leq \!\!\!\!& \frac{3}{2}\big(\|\nabla f(\vtheta_t)\|_2^2 - (\partial_{\ell} f(\vtheta_t))^2 \big) + \frac{\|\vw\|_2^2}{2c_t^2M} + 32  c_t^4 N^6 \|\vw\|_1^2.
\end{eqnarray}
Summing the above inequality over $\ell=1,\ldots,N$ yields
\begin{eqnarray}
\label{summation covariance of g ll conclusion appendix another}
\sum_{\ell=1}^N\Var(\hat g_\ell(\vtheta_t)\mid\calF_t) \leq   \frac{3(N-1)}{2}\|\nabla f(\vtheta_t)\|_2^2  + \frac{N\|\vw\|_2^2}{2c_t^2M} + 32  c_t^4 N^7 \|\vw\|_1^2.
\end{eqnarray}
Combining \eqref{upper bound of the first term in squared term} and \eqref{summation covariance of g ll conclusion appendix another}, we obtain
\begin{eqnarray}
\label{expansion of squared term another}
\E[\|\hat g(\vtheta_t)\|_2^2 \mid \calF_t ] \leq \frac{3N+1}{2}\|\nabla f(\vtheta_t)\|_2^2  + \frac{N\|\vw\|_2^2}{2c_t^2M} + 64  c_t^4 N^7 \|\vw\|_1^2.
\end{eqnarray}
Substituting \eqref{conditional expectation on the cross term}, \eqref{upper bound of the first term in squared term bound}, and \eqref{expansion of squared term another} into \eqref{expansion of iteration} gives
\begin{eqnarray}
\label{expansion of iteration1}
\E[f(\vtheta_{t+1})\mid \calF_t  ] &\!\!\!\!\leq \!\!\!\!& f(\vtheta_t) - \frac{\mu_t}{2}(1 - 2(3N^2+N)\|\vw\|_1 \mu_t )\|\nabla f(\vtheta_t)\|_2^2\nonumber\\
&\!\!\!\! \!\!\!\!& + \frac{N^2\|\vw\|_1\|\vw\|_2^2}{c_t^2 M} \mu_t^2 + 8c_t^4 N^{7}\|\vw\|_1^2 (\mu_t + 16N\|\vw\|_1\mu_t^2 ).
\end{eqnarray}
Finally, under the step size condition $\mu_t \leq \frac{1}{4(3N^2+N)\|\vw\|_1}$, we have $\frac{1}{2} + 8N\|\vw\|_1\mu_t\leq \frac{1}{2} + \frac{8N\|\vw\|_1}{4(3N^2+N)\|\vw\|_1}\leq 1$ for $N\geq 1$. Therefore, \eqref{expansion of iteration1} reduces to
\begin{eqnarray}
\label{expansion of iteration2}
\E[f(\vtheta_{t+1})\mid \calF_t  ] \leq   f(\vtheta_t) - \frac{\mu_t}{4}\|\nabla f(\vtheta_t)\|_2^2 + \frac{N^2\|\vw\|_1\|\vw\|_2^2}{c_t^2 M} \mu_t^2 + 16c_t^4 N^{7}\|\vw\|_1^2\mu_t,
\end{eqnarray}
which completes the proof.

\end{proof}

\begin{lemma}
\label{lemma:convergence of T time}
Under the setting of \Cref{lem:onestep of SPSA}, suppose that the measurement budget  at iteration $t$ is $M_t$. Then, after $T$ iterations,
\begin{eqnarray}
\label{expansion of iteration2 another4 conlcusion}
\min_{0\leq t \leq T-1}\E[\|\nabla f(\vtheta_t)\|_2^2] \leq \frac{4 (f(\vtheta_0) - f^\star)}{\sum_{t=0}^{T-1}\mu_t} + \frac{4N^2\|\vw\|_1\|\vw\|_2^2\sum_{t=0}^{T-1}\frac{\mu_t^2}{c_t^2 M_t}}{\sum_{t=0}^{T-1}\mu_t}  + \frac{64N^{7}\|\vw\|_1^2\sum_{t=0}^{T-1}c_t^4\mu_t}{\sum_{t=0}^{T-1}\mu_t}.
\end{eqnarray}
Here, the expectation is taken over all randomness generated by the SPSA perturbations
$\{\mDelta_t\}_{t=0}^{T-1}$ and the measurement noises
$\{\xi(\vtheta_t)\}_{t=0}^{T-1}$ throughout the algorithm.
\end{lemma}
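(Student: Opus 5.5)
The plan is to telescope the one-step descent inequality of \Cref{lem:onestep of SPSA} over $t=0,\ldots,T-1$, with the per-iteration budget $M$ replaced by $M_t$. The lemma holds for any fixed shot count at step $t$, since its proof only used \Cref{lem:noise-model} with that $M$. So, conditional on $\calF_t$,
\begin{eqnarray*}
\E[f(\vtheta_{t+1})\mid\calF_t] \le f(\vtheta_t) - \frac{\mu_t}{4}\|\nabla f(\vtheta_t)\|_2^2 + \frac{N^2\|\vw\|_1\|\vw\|_2^2}{c_t^2M_t}\mu_t^2 + 16c_t^4N^7\|\vw\|_1^2\mu_t .
\end{eqnarray*}
Next I take total expectation using the tower property. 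This is legitimate because $\vtheta_t$ is $\calF_t$-measurable and all quantities are bounded: by \Cref{lem:fbounded}, $|f|\le\|\vw\|_1$, and the gradient is bounded as well. The result is an unconditional inequality relating $\E[f(\vtheta_{t+1})]$, $\E[f(\vtheta_t)]$ and $\E[\|\nabla f(\vtheta_t)\|_2^2]$.

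Summing from $t=0$ to $T-1$ and moving the gradient terms to the left gives
\begin{eqnarray*}
\frac14\sum_{t=0}^{T-1}\mu_t\E[\|\nabla f(\vtheta_t)\|_2^2] \le f(\vtheta_0)-\E[f(\vtheta_T)] + N^2\|\vw\|_1\|\vw\|_2^2\sum_{t=0}^{T-1}\frac{\mu_t^2}{c_t^2M_t} + 16N^7\|\vw\|_1^2\sum_{t=0}^{T-1}c_t^4\mu_t .
\end{eqnarray*}
I then use $\E[f(\vtheta_T)]\ge f^\star$, which follows from the definition of $f^\star$ as the infimum; \Cref{lem:fbounded} guarantees that $f^\star$ is finite. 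This bounds the first difference by $f(\vtheta_0)-f^\star$. Here $\vtheta_0$ is deterministic, or else one conditions on it.

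Finally, the left-hand side is at least $\frac14\big(\min_t\E[\|\nabla f(\vtheta_t)\|_2^2]\big)\sum_t\mu_t$, because the $\mu_t$ are positive. Dividing by $\frac14\sum_t\mu_t$ produces exactly the three terms of the claim, with constants $4$, $4N^2\|\vw\|_1\|\vw\|_2^2$ and $64N^7\|\vw\|_1^2$.

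There is no real obstacle here. The only points needing care are, first, that the step-size condition $\mu_t\le 1/(4(3N^2+N)\|\vw\|_1)$ holds for every $t$, so that \Cref{lem:onestep of SPSA} applies at each iteration. Second, the conditional bias and variance bounds are applied at the random point $\vtheta_t$, which is valid because the fresh perturbation $\mDelta_t$ and the measurement noise at step $t$ are independent of $\calF_t$.
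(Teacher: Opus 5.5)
Your proposal is correct and follows the paper's proof step for step: it applies \Cref{lem:onestep of SPSA} with $M_t$ at each iteration, takes total expectations via the tower property, telescopes over $t=0,\ldots,T-1$, bounds $\E[f(\vtheta_T)]\ge f^\star$, and uses that the $\mu_t$-weighted average of $\E[\|\nabla f(\vtheta_t)\|_2^2]$ dominates its minimum. Your two remarks, that the step-size condition must hold at every $t$ and that $\mDelta_t$ and the step-$t$ measurement noise are independent of $\calF_t$, are the same assumptions the paper relies on, mostly without stating them.
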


\begin{proof}
Applying \Cref{lem:onestep of SPSA} at iteration $t$ gives
\begin{eqnarray}
\label{expansion of iteration2 another}
\E[f(\vtheta_{t+1})\mid \calF_t  ] \leq   f(\vtheta_t) - \frac{\mu_t}{4}\|\nabla f(\vtheta_t)\|_2^2 + \frac{N^2\|\vw\|_1\|\vw\|_2^2}{c_t^2 M_t} \mu_t^2 + 16c_t^4 N^{7}\|\vw\|_1^2\mu_t.
\end{eqnarray}

Taking expectation over all randomness generated up to iteration $t$, namely
$\{\mDelta_s,\xi(\vtheta_s)\}_{s=0}^{t}$, and using the tower property
\begin{eqnarray}
\label{tower property at time t}
\E\!\left[\E[f(\vtheta_{t+1})\mid\calF_t]\right] = \E[f(\vtheta_{t+1})],
\end{eqnarray}
we obtain
\begin{eqnarray}
\label{expansion of iteration2 another1}
\E[f(\vtheta_{t+1})] \leq   \E[f(\vtheta_t)] - \frac{\mu_t}{4}\E[\|\nabla f(\vtheta_t)\|_2^2] + \frac{N^2\|\vw\|_1\|\vw\|_2^2}{c_t^2 M_t} \mu_t^2 + 16c_t^4 N^{7}\|\vw\|_1^2\mu_t.
\end{eqnarray}

Summing \eqref{expansion of iteration2 another1} over $t=0,\ldots,T-1$ yields
\begin{eqnarray}
\label{expansion of iteration2 another2}
&\!\!\!\! \!\!\!\!& \sum_{t=0}^{T-1}\E[f(\vtheta_{t+1})] \leq  \sum_{t=0}^{T-1}\E[f(\vtheta_t)] - \frac{1}{4}\sum_{t=0}^{T-1}\mu_t\E[\|\nabla f(\vtheta_t)\|_2^2] + N^2\|\vw\|_1\|\vw\|_2^2\sum_{t=0}^{T-1}\frac{\mu_t^2}{c_t^2 M_t}  + 16N^{7}\|\vw\|_1^2\sum_{t=0}^{T-1}c_t^4\mu_t\nonumber\\
&\!\!\!\!\Longrightarrow\!\!\!\!& \E[f(\vtheta_{T})] - f(\vtheta_0) \leq - \frac{1}{4}\sum_{t=0}^{T-1}\mu_t\E[\|\nabla f(\vtheta_t)\|_2^2] + N^2\|\vw\|_1\|\vw\|_2^2\sum_{t=0}^{T-1}\frac{\mu_t^2}{c_t^2 M_t}  + 16N^{7}\|\vw\|_1^2\sum_{t=0}^{T-1}c_t^4\mu_t,
\end{eqnarray}
where we used the fact that $\vtheta_0$ is deterministic, so that $\E[f(\vtheta_0)]=f(\vtheta_0)$.

Since
$f^\star=\inf_{\vtheta}f(\vtheta)$,
we have
$f(\vtheta_T)\ge f^\star$
almost surely.
Therefore,
\begin{eqnarray}
\label{lower bound of energy at time T}
\E[f(\vtheta_T)]\ge f^\star.
\end{eqnarray}
Substituting this bound into \eqref{expansion of iteration2 another2} gives
\begin{eqnarray}
\label{expansion of iteration2 another3}
&\!\!\!\! \!\!\!\!&\hspace{-0.5cm}f^\star - f(\vtheta_0) \leq - \frac{1}{4}\sum_{t=0}^{T-1}\mu_t\E[\|\nabla f(\vtheta_t)\|_2^2] + N^2\|\vw\|_1\|\vw\|_2^2\sum_{t=0}^{T-1}\frac{\mu_t^2}{c_t^2 M_t}  + 16N^{7}\|\vw\|_1^2\sum_{t=0}^{T-1}c_t^4\mu_t\nonumber\\
&\!\!\!\!\!\!\!\!&\hspace{-1cm}\Longrightarrow \sum_{t=0}^{T-1}\mu_t\E[\|\nabla f(\vtheta_t)\|_2^2] \leq 4 (f(\vtheta_0) - f^\star) + 4N^2\|\vw\|_1\|\vw\|_2^2\sum_{t=0}^{T-1}\frac{\mu_t^2}{c_t^2 M_t}  + 64N^{7}\|\vw\|_1^2\sum_{t=0}^{T-1}c_t^4\mu_t\nonumber\\
&\!\!\!\!\!\!\!\!&\hspace{-1cm}\Longrightarrow \frac{\sum_{t=0}^{T-1}\mu_t\E[\|\nabla f(\vtheta_t)\|_2^2]}{\sum_{t=0}^{T-1}\mu_t} \leq \frac{4 (f(\vtheta_0) - f^\star)}{\sum_{t=0}^{T-1}\mu_t} + \frac{4N^2\|\vw\|_1\|\vw\|_2^2\sum_{t=0}^{T-1}\frac{\mu_t^2}{c_t^2 M_t}}{\sum_{t=0}^{T-1}\mu_t}  + \frac{64N^{7}\|\vw\|_1^2\sum_{t=0}^{T-1}c_t^4\mu_t}{\sum_{t=0}^{T-1}\mu_t}.
\end{eqnarray}

Finally, since a weighted average is bounded below by the minimum of its entries,
\begin{eqnarray}
\label{inequality of minima value}
\frac{\sum_{t=0}^{T-1}\mu_t\E[\|\nabla f(\vtheta_t)\|_2^2]}{\sum_{t=0}^{T-1}\mu_t} \geq \min_{0\leq t \leq T-1}\E[\|\nabla f(\vtheta_t)\|_2^2],
\end{eqnarray}
the desired result follows immediately.
\end{proof}


\end{document}

%% file: macro.tex
\usepackage{url}
\usepackage[hidelinks]{hyperref}
\usepackage{amsmath,amsthm,amssymb,amsbsy}
\usepackage{paralist}
\usepackage{xcolor}
\usepackage{color}
\usepackage{comment}
\usepackage{multirow}
\usepackage[toc, page]{appendix}

\usepackage{fancyhdr}
\usepackage{cite}
\usepackage{cleveref}

\newtheorem{lemma}{Lemma}
\newtheorem{cor}{Corollary}
\newtheorem{prop}{Proposition}

\newtheorem{theorem}{Theorem}

\theoremstyle{remark}

\newcommand{\R}{\mathbb{R}}

\newcommand{\C}{\mathbb{C}}

\newcommand{\e}{\begin{equation}}
\newcommand{\ee}{\end{equation}}
\newcommand{\en}{\begin{equation*}}
\newcommand{\een}{\end{equation*}}
\newcommand{\eqn}{\begin{eqnarray}}
\newcommand{\eeqn}{\end{eqnarray}}
\newcommand{\bmat}{\begin{bmatrix}}
\newcommand{\emat}{\end{bmatrix}}

\DeclareMathAlphabet\mathbfcal{OMS}{cmsy}{b}{n}

\newcommand{\E}{\operatorname{\mathbb{E}}}

\newcommand{\vct}[1]{\boldsymbol{#1}}
\newcommand{\mtx}[1]{\boldsymbol{#1}}

\newcommand{\<}{\langle}
\renewcommand{\>}{\rangle}

\newcommand{\trace}{\operatorname{trace}}

\newcommand{\wt}{\widetilde}

\newcommand{\calF}{\mathcal{F}}

\newcommand{\calN}{\mathcal{N}}

\newcommand{\vg}{\vct{g}}

\newcommand{\vw}{\vct{w}}
\newcommand{\vx}{\vct{x}}
\newcommand{\vy}{\vct{y}}

\newcommand{\vtheta}{\vct{\theta}}

\newcommand{\vphi}{\vct{\phi}}

\newcommand{\vrho}{\vct{\rho}}

\newcommand{\mG}{\mtx{G}}
\newcommand{\mH}{\mtx{H}}

\newcommand{\mL}{\mtx{L}}

\newcommand{\mP}{\mtx{P}}

\newcommand{\mR}{\mtx{R}}

\newcommand{\mU}{\mtx{U}}

\newcommand{\mX}{\mtx{X}}
\newcommand{\mY}{\mtx{Y}}
\newcommand{\mZ}{\mtx{Z}}

\newcommand{\mDelta}{\mtx{\Delta}}

\newcommand{\mId}{{\bf I}}

\graphicspath{{./figs/}}

\newlength{\imgwidth}
\newboolean{twoColVersion}
\setboolean{twoColVersion}{false}
\newcommand{\twoCol}[2]{\ifthenelse{\boolean{twoColVersion}} {#1} {#2} }